\renewcommand{\thesubsection}{\thesection.\arabic{subsection}}
\renewcommand{\p@subsection}{} 
\newtheorem{definition}{Definition}
\newtheorem{lemma}{Lemma}
\newtheorem{theorem}{Theorem}
\begin{document}

\title{Efficient and Scalable Wave Function Compression Using Corner Hierarchical Matrices}

\author{Kenneth O. Berard}
\altaffiliation{Contributed equally}
\altaffiliation{Current address: Department of Chemistry, Brown University, Providence, RI, USA}
\affiliation{Institute for Advanced Computational Science, Stony Brook University, Stony Brook, NY, USA 11794}
\affiliation{Department of Chemistry, Stony Brook University, Stony Brook, NY, USA 11794}

\author{Hongji Gao}
\altaffiliation{Contributed equally}
\affiliation{Institute for Advanced Computational Science, Stony Brook University, Stony Brook, NY, USA 11794}
\affiliation{Department of Applied Mathematics and Statistics, Stony Brook University, Stony Brook, NY, USA 11794}

\author{Alexander Teplukhin}
\altaffiliation{Current Address: Intel Corporation RA2, Hillsboro, OR USA  97124}
\affiliation{Institute for Advanced Computational Science, Stony Brook University, Stony Brook, NY, USA 11794}
\affiliation{Department of Chemistry, Stony Brook University, Stony Brook, NY, USA 11794}

\author{Xiangmin Jiao}
\email[Correspondence email address: ]{xiangmin.jiao@stonybrook.edu}
\affiliation{Institute for Advanced Computational Science, Stony Brook University, Stony Brook, NY, USA 11794}
\affiliation{Department of Applied Mathematics and Statistics, Stony Brook University, Stony Brook, NY, USA 11794}

\author{Benjamin G. Levine}
\email[Correspondence email address: ]{ben.levine@stonybrook.edu}
\affiliation{Institute for Advanced Computational Science, Stony Brook University, Stony Brook, NY, USA 11794}
\affiliation{Department of Chemistry, Stony Brook University, Stony Brook, NY, USA 11794}

\begin{abstract}
The exponential scaling of complete active space (CAS) and full configuration interaction (CI) calculations limits the ability of quantum chemists to simulate the electronic structures of strongly correlated systems. Herein, we present corner hierarchically approximated CI (CHACI), an approach to wave function compression based on corner hierarchical matrices (CH-matrices)---a new variant of hierarchical matrices based on a block-wise low-rank decomposition. By application to dodecacene, a strongly correlated molecule, we demonstrate that CH matrix compression provides superior compression compared to a truncated global singular value decomposition. The compression ratio is shown to improve with increasing active space size. By comparison of several alternative schemes, we demonstrate that superior compression is achieved by a) using a blocking approach that emphasizes the upper-left corner of the CI vector, b) sorting the CI vector prior to compression, and c) optimizing the rank of each block to maximize information density.
\end{abstract}

\maketitle

\section{Introduction}
\label{sec:introduction}

Complete active space (CAS) configuration interaction (CI) and its self-consistent field counterpart (CASSCF) remain the quantum chemist's workhorse method for describing multi-reference systems.\cite{Roos1980} The most frequent targets for multi-reference calculations include strongly correlated systems, dissociated molecules, transition states, conical intersections involving the electronic ground state, and excited electronic states with multiple excited characters. CAS methods are viable when the multi-reference effects in question can be described in an active orbital space of fewer than $\sim$20 electrons. Unfortunately, the factorial scaling of the CAS wave function with active space size renders it impractical for anything larger.

As such, strategies for using compressed representations of the CASCI wave function have been widely explored in recent years, exploiting the large number of zero or near-zero elements of the CI coefficient vector (``configurational deadwood''\cite{Bytautas2009}) to reduce compute time and storage. For example, restricted, generalized, and other active space methods use different truncations of the CAS expansion to reduce storage and cost.\cite{Malmqvist1990,Fleig2001,Fleig2003,MaLiManni2011,Casanova2022,Parker2013,Ivanic2003} Similarly, selected CI approaches use various physical criteria to develop a sparse representation of the CI vector on-the-fly.\cite{Huron1973,Harrison1991,Daudey1993,Greer1995a,Greer1995b,Coe2014a,Coe2014b,Evangelista2014,Coe2018,Holmes2016,Schriber2016,Ohtsuka2017,Tubman2020,Goings2021,Neese2003,Nakatsuji2005,Abrams2005,Liu2016,Holmes2017,Li2018,Abraham2020} Seniority CI achieves an efficient sparse representation of strongly correlated systems via a truncation of the CI space based on seniority number.\cite{Bytautas2011} Full configuration interaction quantum Monte Carlo (FCI-QMC) methods can describe very large active spaces without explicit representation of the entire CI vector, allowing a fixed number of walkers to explore the most important regions of the CI vector stochastically.\cite{Booth2009,Shepherd2014,Li2019} An algorithm for choosing orbitals to maximize CI vector sparsity in exchange-coupled systems has also recently been reported.\cite{LiManniDobrautz2020}

Other methods take advantage of data sparsity rather than sparsity. Whereas a sparse matrix has many zero elements, a data-sparse matrix may contain few zero elements but is compressible because it is low-rank or has other redundancies that may be exploited. For example, a highly efficient representation of the strongly correlated wave function of one-dimensional and quasi-one-dimensional systems can be achieved via the density matrix renormalization group (DMRG) approach.\cite{White1992,Chan2011,Verstraete2023} The DMRG method can be thought of as leveraging the data sparsity of the CI vector. Rank-reduced configuration interaction methods also leverage data sparsity through a low-rank approximation of the CI vector,\cite{Olsen1987,Lindh1988,Koch1992,Taylor2013,Fales2018} and in a similar spirit, ideas from compressive sensing have been adapted to wave function compression.\cite{Knowles2015} Going one step further, it has been argued that the wave function itself is not an essential descriptor of electronic structure and can be replaced by the electron density\cite{Kohn1965} or reduced density matrices.\cite{Coleman1963,Sharma2008,Mazziotti2012} The search for an efficient and broadly applicable compressed wave function representation remains an active and important research direction, and recently there have been efforts to systematize the benchmarking of methods for compression of strongly correlated wave functions.\cite{Stair2020,Eriksen2021}

In this paper, we present a proof-of-concept demonstration of a novel wave function compression strategy based on corner hierarchical matrices (CH-matrices), a new variant of hierarchical matrices. CH-matrices are motivated by H-matrices and H$^2$-matrices,\cite{Boerm2003,Lin2011,Hackbusch2015} which use a block-wise low-rank approximation to a non-sparse matrix, leveraging data sparsity to minimize both storage and the computational cost of matrix operations. For \(N \times N\) matrices, the H-matrix representation allows storage and matrix multiplication with quasi-linear scaling (\(\mathcal{O}(N\log N)\) and \(\mathcal{O}(N\log^2 N)\), respectively). Open-source software is available to manipulate various flavors of hierarchical matrices, and given their block-wise nature, hierarchical matrices are well-suited for massively parallel implementation.

The effectiveness of H- and H$^2$-matrices relies on some degree of diagonal dominance of the matrices. This is because the matrix is subdivided into blocks, with smaller blocks on the diagonal and increasing block size as you move away from the diagonal, as illustrated in Figure~\ref{fig:hierarchical-matrix}. The blocks closest to the diagonal, which are shaded gray, are represented as dense matrices with no loss of accuracy. The off-diagonal blocks (shown in white) are approximated as low-rank. Typically, the rank is held constant despite the fact that the blocks are of different sizes, resulting in greater compression of the data farthest from the diagonal. These hierarchical-matrix techniques are beginning to see use in quantum chemistry. Chow and coworkers have leveraged hierarchical matrices to compress the electron repulsion integrals (ERI).\cite{Xing2020a,Xing2020b,Huang2021} In this context, the hierarchical-matrix approach can be thought of as an algebraic generalization of the fast multipole method.\cite{Barnes1986,Greengard1987,Strain1996} Using the proxy point method, the hierarchically compressed ERI tensor may be computed for an arbitrary contracted Gaussian basis, in a similar spirit. Unfortunately, the CASCI wave function is not diagonally dominant, so these hierarchical-matrix approaches are not very effective in our experience. Instead, we observe that the CASCI wave function is dominated by the upper-left corner of the CI vector. Based on this observation, we have developed a new variant of hierarchical matrices as illustrated in Figure~\ref{fig:corner-hierarchical}, which we call Corner Hierarchical matrices (CH-matrices), to compress matrices with such patterns.

\begin{figure}[htbp]
    \centering
    \begin{minipage}[t]{0.48\textwidth}
        \centering
        \includegraphics[width=0.94\columnwidth]{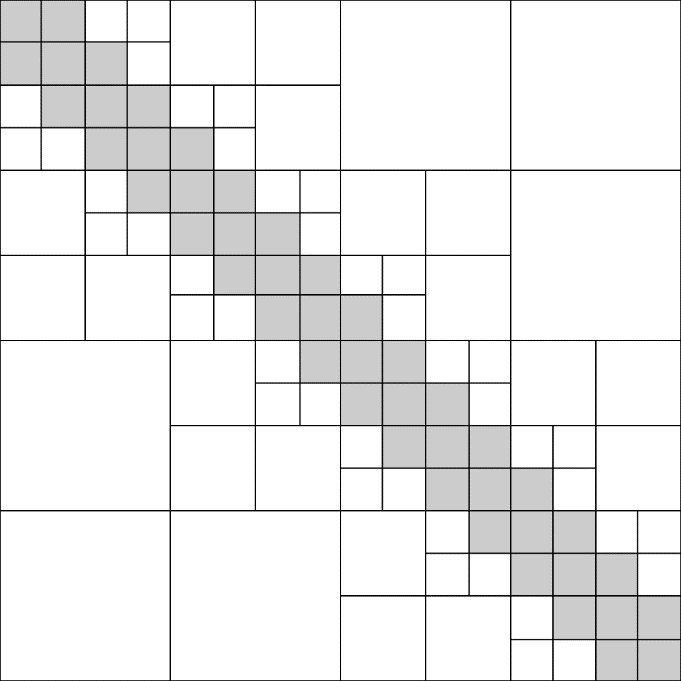}
        \caption{Block representation of an H-matrix. White blocks are stored as low-rank approximations, while gray blocks are stored as dense matrices.}
        \label{fig:hierarchical-matrix}
    \end{minipage}
    \hfill
    \begin{minipage}[t]{0.48\textwidth}
        \centering
        \includegraphics[width=0.94\columnwidth]{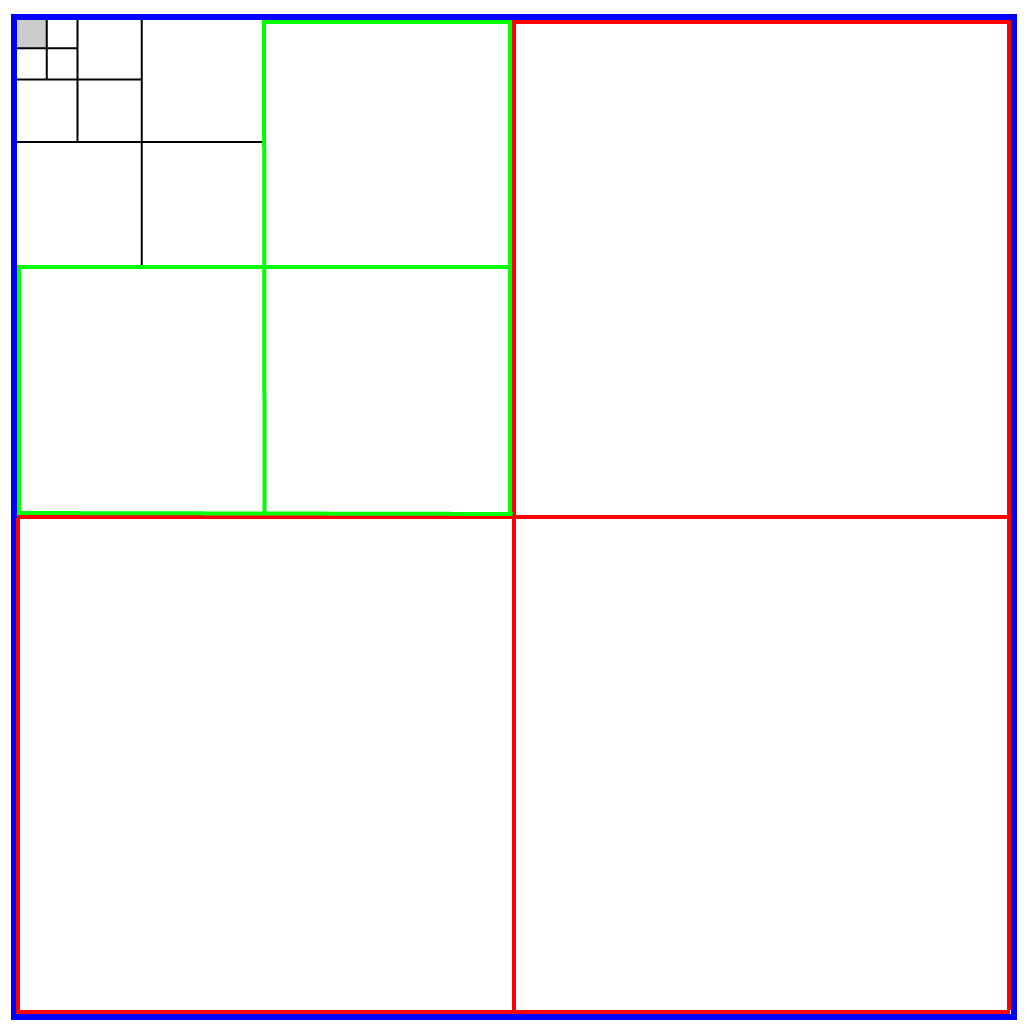}
        \caption{Schematic of the corner-hierarchical matrix structure. The upper-left corner is compressed hierarchically and the other blocks are compressed adaptively.}
        \label{fig:corner-hierarchical}
    \end{minipage}
\end{figure}


Herein, we introduce our CH-matrix approach to compress the CASCI vectors of strongly correlated systems. In section~\ref{sec:theory-and-methods}, we define the corner hierarchically approximated CI (CHACI) representation of the wave function, along with several other candidate compression strategies. In section~\ref{sec:results-and-discussion}, we compare the performance of CHACI to several other approaches, including a truncated singular value decomposition approach that represents an optimal global low-rank approximation, akin to rank-reduced full CI (RRFCI). Finally, in section~\ref{sec:conclusions}, we present conclusions and discuss the prospects for the direct optimization of CHACI wave functions. Some details and analysis of the CHACI algorithm are presented in the appendix.

\section{Theory and Methods}
\label{sec:theory-and-methods}
Ultimately we seek to directly optimize the electronic wave function in a compressed format. However, in the present paper, we do not solve for the wave function in a compressed format. The purpose of this paper is to demonstrate that hierarchical matrices provide a viable strategy for wave function compression. To this end, we will solve for the full CASCI wave function via traditional approaches, and then compress the resulting wave function into one of several forms. We can then decompress, and apply several metrics to quantify the accuracy of the compressed wave function.

Throughout this work, we will define the set of Slater determinants, \(\left\{\left| \alpha \beta \right\rangle\right\}\), in the CASCI wave function in terms of their composite \(\alpha\) and \(\beta\) strings. The \(\alpha\) (\(\beta\)) strings index the possible ways that \(N_{\alpha}\) (\(N_{\beta}\)) spin-up (spin-down) electrons may occupy \(N_{\text{MO}}\) active spatial molecular orbitals. Therefore, the CASCI wave function may be represented by
\begin{equation}
\left| C \right\rangle = \sum_{\alpha \beta} C_{\alpha \beta} \left| \alpha \beta \right\rangle.
\end{equation}
The set of expansion coefficients, \(\{C_{\alpha \beta}\}\), are typically known as the ``CI vector.''   As such, it is typically thought of as an \(M_{\alpha} M_{\beta}\) vector, but in this work, we will treat it as an \(M_{\alpha} \times M_{\beta}\) matrix, \(\mathbf{C}\). Here \(M_{\alpha}\) and \(M_{\beta}\) represent the number of \(\alpha\) and \(\beta\) strings, respectively,
\begin{equation}
M_{\alpha} = \frac{N_{\text{MO}}!}{N_{\alpha}!(N_{\text{MO}} - N_{\alpha})!}
\end{equation}
and
\begin{equation}
M_{\beta} = \frac{N_{\text{MO}}!}{N_{\beta}!(N_{\text{MO}} - N_{\beta})!}.
\end{equation}
Where not otherwise noted, the strings will be ordered according to the scheme of Duch.\cite{Duch1986}

Below, we define several strategies for wave function compression. CHACI compression, described in subsection~\ref{subsec:haci-wave-function-compression}, is the most efficient strategy we discovered in this work. Several other compression schemes, which we present in order to analyze the necessity of the various features of the CHACI algorithm, are also defined in subsection~\ref{subsec:haci-wave-function-compression}. In subsections~\ref{subsec:model-problems-and-computational-details} we describe the model problems that we choose to test our algorithm and present computational details. In subsection~\ref{subsec:performance-metrics}, the performance metrics that we use to quantify the accuracy of the compressed wave function are defined.

\subsection{CHACI Wave Function Compression}
\label{subsec:haci-wave-function-compression}

The CHACI wave function compression scheme is based on what we call a corner hierarchical (CH) compression. In contrast to the H-matrix scheme illustrated in Figure~\ref{fig:hierarchical-matrix}, which was conceived to compress diagonally-dominant matrices, the CH scheme (Figure~\ref{fig:corner-hierarchical}) is designed to compress matrices that are dominated by the upper-left corner of the matrix. Given the CI vector stored as an $M\times M$ matrix, the CH compression algorithm proceeds as follows. The total matrix (blue) is subdivided into four approximately equal-sized subblocks. The upper-right, lower-left, and lower-right subblocks (red in Figure~\ref{fig:corner-hierarchical}) are compressed adaptively using either a low-rank or dense approximation. The upper left subblock is further subdivided, yielding four smaller subblocks. In a similar fashion, three (green) subblocks are again compressed. The number of levels is approximately logarithmic in $M$ and determined \textit{a priori}. Given the number of levels, we then determine whether to store each block as sparse or dense, as we describe in Algorithm~\ref{alg:optimal-chaci} and analyze its near optimality in Appendix~\ref{sec:near-optimal-chaci-compression}.

Throughout this work, we will employ truncated singular value decomposition (TSVD) to compress individual blocks. The SVD of an \(m \times n\) block, \(\mathbf{B}\), is
\begin{equation}
\mathbf{B} = \mathbf{U}\mathbf{\Sigma} \mathbf{V}^{\dagger},
\end{equation}
where \(\mathbf{U}\) and \(\mathbf{V}\) are, respectively, \(m \times m\) and \(n \times n\) unitary matrices comprising the left and right singular vectors, and \(\mathbf{\Sigma}\) is a diagonal matrix, with diagonal elements equal to the singular values. In the context of this work, $m$ and $n$ are typically equal, but they may differ in general for rectangular matrices. The SVD itself is an exact representation of the matrix \(\mathbf{B}\), and does not provide any compression. However, truncation by elimination of the smallest singular values and corresponding singular vectors provides an efficient low-rank approximation to \(\mathbf{B}\),
\begin{equation}
\mathbf{B} \approx \tilde{\mathbf{B}} = a\tilde{\mathbf{B}}' = a\mathbf{U}_T \mathbf{\Sigma}_T \mathbf{V}_T^{\dagger},
\end{equation}
where \(a\) is a scalar normalization factor, defined below. Upon compression \(\mathbf{U}_T\) and \(\mathbf{V}_T^{\dagger}\) are now \(m \times k\) and \(k \times n\), where \(k\) is the compression rank. Efficient compression can be achieved when an accurate representation of \(\mathbf{B}\) is achieved for \(k \ll \min(n, m)\). TSVD compression decreases the Frobenius norm of the block, and therefore the overall wave function. To compensate for this, we rescale each block to maintain its original norm after compression, according to
\begin{equation}
    \label{eq:rescale}
a = \frac{\|\mathbf{B}\|_F}{\|\mathbf{U}_T \mathbf{\Sigma}_T \mathbf{V}_T^{\dagger}\|_F}.
\end{equation}
This block-wise renormalization prevents the shifting of population from sparser to denser blocks, which would be an artifact of global renormalization. Note that truncated SVD, without subsequent renormalization, is an optimal low-rank approximation to the block, in that the Frobenius norm of the difference between \(\tilde{\mathbf{B}}'\) and \(\mathbf{B}\),
\begin{equation}
\|\tilde{\mathbf{B}}' - \mathbf{B}\|_F = \sum_{i, j}(\tilde{B}'_{ij} - B_{ij})^2,
\end{equation}
is minimized, for a given rank. However, this condition does not guarantee variational optimality (minimal energy).

In CHACI, CH compression of \(\mathbf{C}\) is supplemented by two additional strategies aimed at allowing more efficient compression. First, prior to compression, the rows (columns) of \(\mathbf{C}\) are sorted such that the rows (columns) are arranged in descending order according to their \(L^2\) norms. In this way, the large elements are concentrated in the upper left corner of \(\mathbf{C}\), to the extent possible. 

Secondly, in the CHACI scheme, we do not employ the same value of the compression rank for all blocks. Instead, the SVD of each block is truncated to a different rank, such that the overall Frobenius norm of the compressed wave function, \(\|\tilde{\mathbf{C}}\|_F\), is optimized.  A detailed derivation and algorithm are presented in the appendix, but here we summarize the primary features.
The information density, \(\rho_i\), associated with each singular vector pair is computed according to
\begin{equation}
\rho_i = \frac{\sigma_i^2}{n_\text{row}+n_\text{col}+1},
\end{equation}
where \(\sigma_i\) is the singular value, and $n_\text{row}$ and $n_\text{col}$ are the lengths of the left and right singular vectors, respectively. Typically, $n_\text{row}$ and $n_\text{col}$ are equal in our algorithms, but they may differ in general for rectangular blocks. The information density can be thought of as the total contribution to \(\|\tilde{\mathbf{C}}\|_F\) per unit of storage.

Only the singular vector pair whose information density is above a user-defined threshold will be stored.  At a given storage value, this algorithm approximately maximizes the Frobenius norm of the wave function, \(\|\tilde{\mathbf{C}}\|_F\).  Equivalently, this can be thought of as discarding the least amount of information.  However, this algorithm does not necessarily optimize the energy. However, when higher accuracy is required and the threshold for the Frobenius norm error is set very low, several of the upper-left corner blocks may become so dense that the storage cost of TSVD exceeds that of the dense format. To address this, we add another test after the TSVD: If the memory cost of TSVD is no less than the dense format, we directly store the block in dense format. For completeness, we present the pseudocode and its detailed analysis in the Appendix.

\begin{table}[htbp]
\centering
\caption{Summary of the features of the wave function compression schemes compared in this work.}
\begin{tabular}{l|l|c|c} \hline\hline
Scheme & Blocking & Sorting? & Optimal Rank? \\ \hline
CHACI & Corner Hierarchical & Y & Y \\
SR-CHACI & Corner Hierarchical & Y & N \\
U-CHACI & Corner Hierarchical & N & Y \\
H-Matrix & Diagonally Dominant & N & N \\
Truncated SVD & None & N & N \\ \hline\hline
\end{tabular}
\label{tab:compression-schemes}
\end{table}

Taken together, we refer to CH compression of the reordered \(\mathbf{C}\) matrix with optimal rank as CHACI compression, without any additional modifier. To analyze the necessity of different features of the CHACI algorithm, we will present results for several other compression schemes, summarized in Table~\ref{tab:compression-schemes}. The static rank (SR-) CHACI scheme is identical to CHACI, except that the rank of the compressed blocks is held constant rather than optimized per block. We introduce SR-CHACI in order to quantify the utility of dynamically increasing the rank. The unsorted (U-) CHACI scheme is identical to CHACI, except that the rows/columns remain ordered according to the original Duch scheme, rather than sorted by norm. This scheme is introduced to quantify the impact of sorting.

In addition, we will test H-matrix compression\cite{Hackbusch1999} of the wave function. The H-matrix representation is an existing hierarchical representation that is not based on CH blocking. Instead, the blocking structure of the matrix is optimized to represent diagonally dominant matrices, as pictured in Figure~\ref{fig:hierarchical-matrix}. In our implementation, the sparse blocks are compressed by the same TSVD approach that we used for CHACI. The rows/columns are not sorted, and the rank is held constant for all blocks.

Finally, as an important point of comparison, we include a non-hierarchical compression scheme: TSVD of the entire \(\mathbf{C}\) matrix. This form of the wave function is akin to that in RRFCI.\cite{Olsen1987,Lindh1988,Koch1992,Taylor2013,Fales2018} To maintain normalization, we rescale as in Eq.~\eqref{eq:rescale}. Given that implementation of a CI solver based on a hierarchically compressed wave function will be challenging, CHACI must provide a significant improvement in performance over TSVD to warrant further consideration.

\subsection{Model Problems and Computational Details}
\label{subsec:model-problems-and-computational-details}
\begin{figure}[htbp]
    \centering
    \includegraphics[width=4.99in, height=1.22in, trim=1.50in 1.50in 0.00in 1.40in]{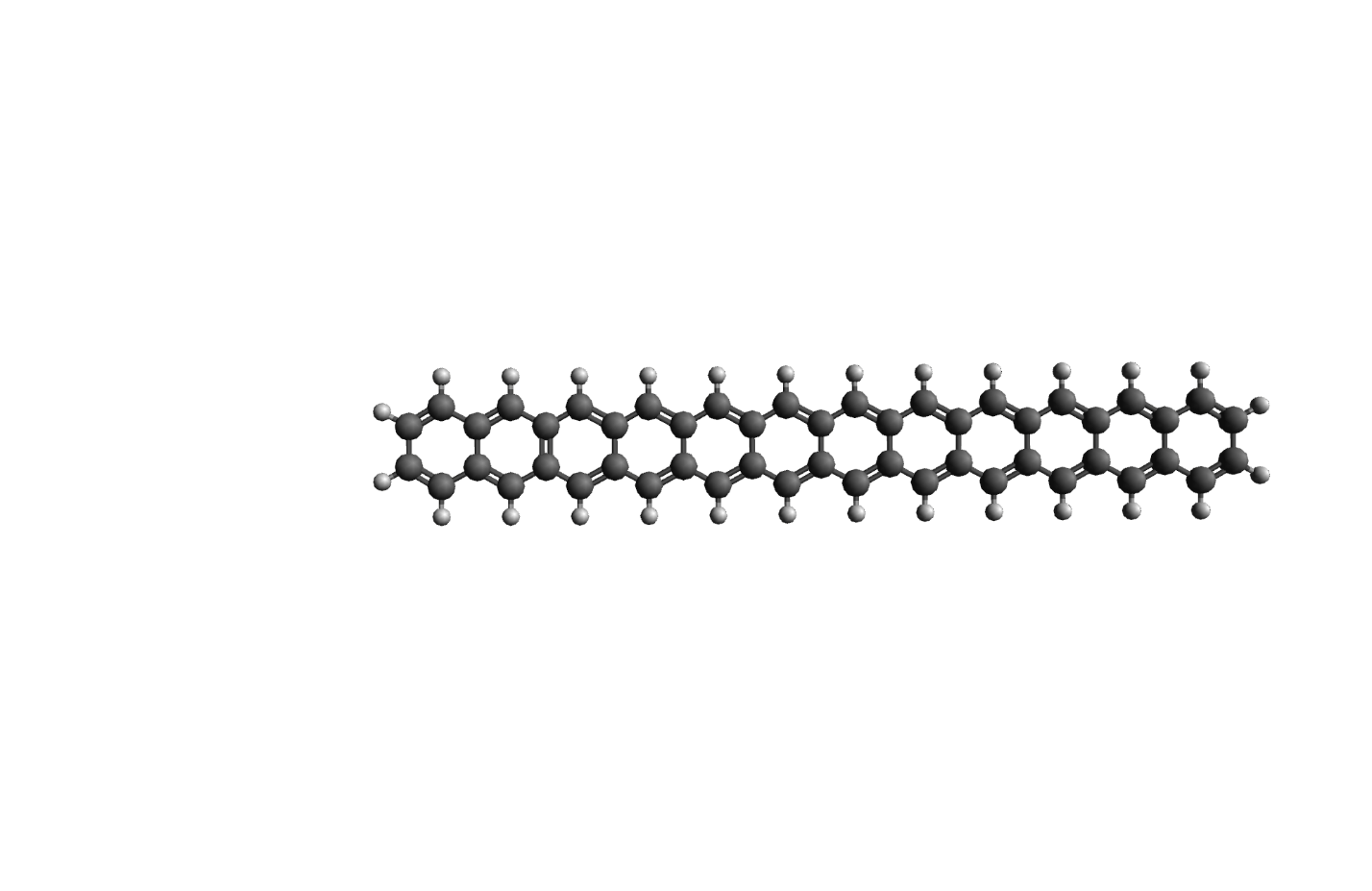}
    \caption{Molecular structure of dodecacene (12-acene).}
    \label{fig:dodecacene-structure}
\end{figure}

\begin{table}[htbp]
\centering
\caption{Total number of double precision floating point values stored in a dense (uncompressed) representation of the singlet wave function CI vector as a function of active space.  Data is given in kdoubles (thousands of doubles).}
\begin{tabular}{c|c} \hline\hline
Active Space & Dense CI Vector Storage (kdoubles) \\ \hline
10-10 & 63 \\ \hline
12-12 & 854 \\ \hline
14-14 & 11,779 \\ \hline
16-16 & 165,637 \\ \hline
\end{tabular}
\label{tab:dense-storage}
\end{table}

As a test case, we have chosen to compute the electronic structure of dodecacene (12-acene; Figure~\ref{fig:dodecacene-structure}). Naively, one might think that longer acenes like 12-acene would have a simple closed-shell, aromatic electronic structure, but they are actually strongly correlated. These systems have significant poly-radical character, so much so that dodecacene is unstable under ambient conditions and has only recently been synthesized under ultrahigh vacuum.\cite{Eisenhut2020} The singlet ground state geometry of 12-acene was optimized at the CAM-B3LYP\cite{Yanai2004} level of theory. Floating occupation molecular orbital\cite{Granucci2000,Slavicek2010} (FOMO) CASCI calculations were then performed with
active spaces of 10-10, 12-12, 14-14, and 16-16, where active spaces are abbreviated \(\langle\text{number of active electrons}\rangle\)-\(\langle\text{number of active orbitals}\rangle\). Table~\ref{tab:dense-storage} presents the total storage required for the CI vector of the exact singlet wave function for each active space, measured in thousands of double-precision floating point values (kdoubles). The storage requirement for triplet wave functions is less by a factor of 0.7-0.8. The STO-3G minimal basis was used for all calculations. In all cases, both the singlet and triplet ground state wave functions are computed at the singlet ground state geometry. All calculations were performed in the TeraChem electronic structure software package.\cite{Ufimtsev2009,Seritan2021,Fales2015,Hohenstein2015}

\subsection{Performance Metrics}
\label{subsec:performance-metrics}
We use several metrics to evaluate the performance of our various compression schemes. All are derived by comparing the properties of the compressed wave function, \(\left| \tilde{C} \right\rangle\), with those of the uncompressed wave function, \(\left| C \right\rangle\). Hereafter we will refer to the latter as the ``exact'' wave function, understanding that it is only exact within the FOMO-CASCI approximation, with a given active space and basis. In all cases, the properties of the compressed wave function are generated by a) outputting the CI vector of the uncompressed wave function from TeraChem, b) applying one of the lossy compression schemes described above, implemented in an external code, c) decompressing the wave function back to its full dimensionality, and d) feeding the resulting CI vector back into TeraChem to compute the relevant property.

To evaluate how well our compressed wave function reproduces relative energies, we compute the vertical singlet-triplet gaps of the exact and compressed wave functions, respectively
\begin{equation}
\Delta E_{S-T} = \langle C_T | \hat{H} | C_T \rangle - \langle C_S | \hat{H} | C_S \rangle
\end{equation}
and
\begin{equation}
\Delta \tilde{E}_{S-T} = \langle \tilde{C}_T | \hat{H} | \tilde{C}_T \rangle - \langle \tilde{C}_S | \hat{H} | \tilde{C}_S \rangle.
\end{equation}
Here the subscripted \(T\) and \(S\) denote the triplet and singlet wave functions, respectively. We report the absolute difference between these values as the error in the gap,
\begin{equation}
\Delta \Delta E_{S-T} = |\Delta \tilde{E}_{S-T} - \Delta E_{S-T}|.
\end{equation}

To quantify the accuracy of the absolute energies, we also report signed errors in the singlet and triplet energies, respectively
\begin{equation}
\Delta E_{S} = \langle \tilde{C}_S | \hat{H} | \tilde{C}_S \rangle - \langle C_S | \hat{H} | C_S \rangle
\end{equation}
and
\begin{equation}
\Delta E_{T} = \langle \tilde{C}_T | \hat{H} | \tilde{C}_T \rangle - \langle C_T | \hat{H} | C_T \rangle.
\end{equation}
Per the variational principle, these errors are non-negative. Note that throughout this work, we compare the energies of the compressed wave functions to the energies of the exact wave functions using the same active space. Thus, our computed errors only include those associated with the compression algorithm, not with the size of the active space.

Lastly, as a separate measure of the accuracy of the wave function, we compute the absolute error in the total spin angular momentum squared (spin contamination) of the compressed singlet and triplet wave functions, respectively
\begin{equation}
\Delta \langle S^2 \rangle = \langle \tilde{C}_S | \hat{S}^2 | \tilde{C}_S \rangle
\end{equation}
and
\begin{equation}
\Delta \langle S^2 \rangle = |\langle \tilde{C}_T | \hat{S}^2 | \tilde{C}_T \rangle - 2|.
\end{equation}
The total spin expectation value is computed using the direct algorithm described in Ref. \citenum{Fales2017}.

In many of the graphs below, the data is plotted as a function of the total number of double-precision variables stored in the compressed representation. The total storage is varied indirectly through the user-chosen parameters of the compression, as described in subsection \ref{subsec:haci-wave-function-compression} and in the Appendix. This provides an apples-to-apples comparison of the accuracy of the wave function at a given level of compression across different compression schemes.

\section{Results and Discussion}
\label{sec:results-and-discussion}
Below, we analyze the performance of CHACI. In subsection~\ref{subsec:accuracy-of-haci-compression-versus-svd}, we compare the accuracy of CHACI compression to that of TSVD compression, as a function of the total storage. In subsection~\ref{subsec:extrapolation-of-performance}, we analyze the prospects for extending CHACI to larger active spaces. In subsections~\ref{subsec:effect-of-dynamic-rank-on-compression} and \ref{subsec:effect-of-sorting-on-compression}, we analyze the necessity of optimal rank and sorting, respectively. Finally, in subsection~\ref{subsec:effect-of-upper-quadrant}, we compare H-matrix compression to TSVD compression.

\subsection{Accuracy of CHACI Compression Versus TSVD}
\label{subsec:accuracy-of-haci-compression-versus-svd}
\begin{figure}[htbp]
    \centering
    \includegraphics[width=6.41in, height=4.49in]{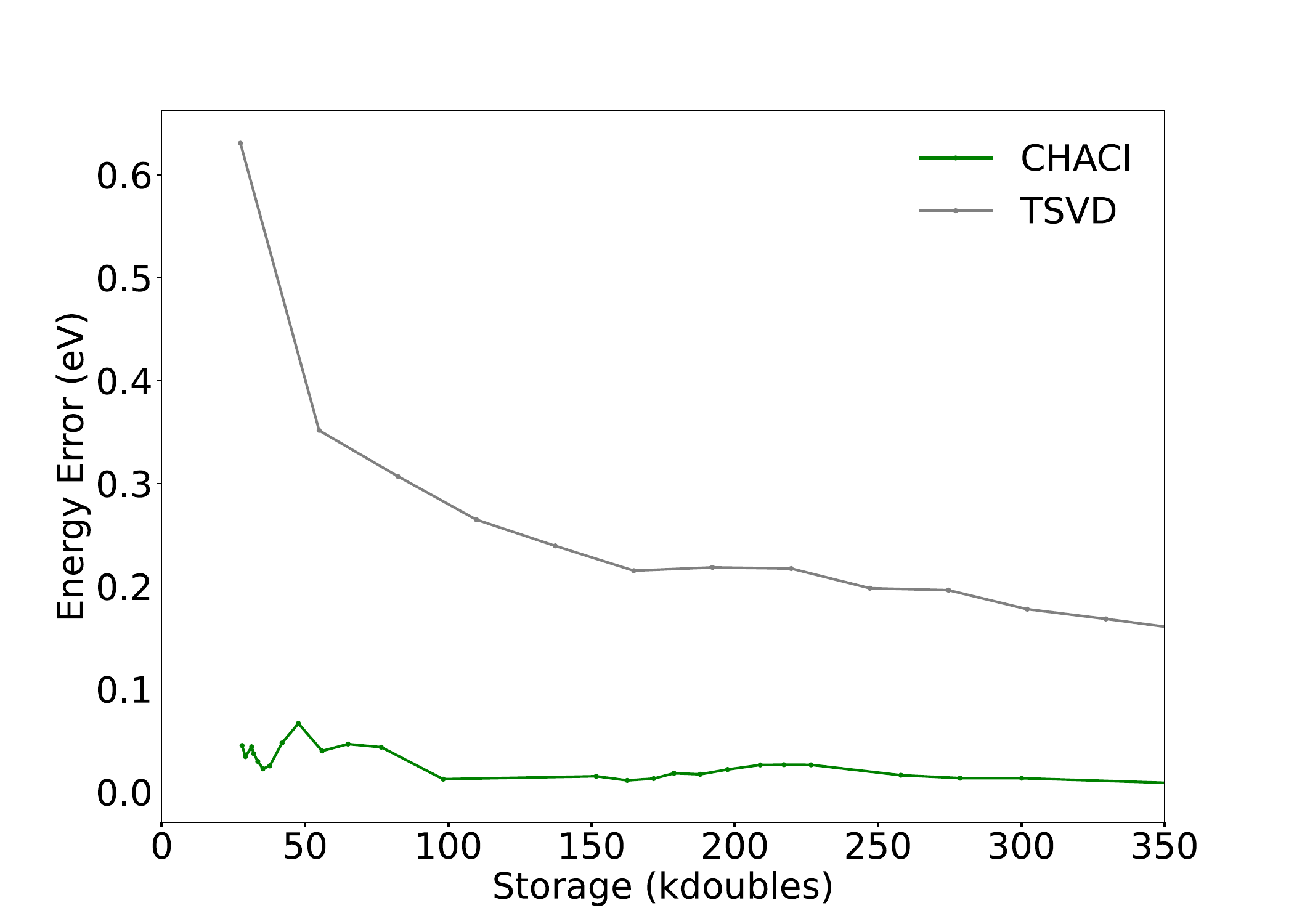}
    \caption{The error in the singlet-triplet gap of 12-acene as a function of total storage, computed with a 14-14 active space. The gray and green lines represent the error corresponding to the compression of the wave function using TSVD and CHACI, respectively.}
    \label{fig:error-gap-14-14}
\end{figure}

We start by comparing the performance of CHACI compression to TSVD compression for the 14-14 active space. Figure~\ref{fig:error-gap-14-14} presents the error in the singlet-triplet gap as a function of total storage. Clearly, CHACI outperforms TSVD in this case. The CHACI error is always less than 0.07 eV, even when only 28 kdoubles are stored (compared to 11,778 kdoubles for dense storage). Truncated SVD also achieves substantial compression, but errors of 0.2 eV remain even with 220 kdoubles stored.

\begin{figure}[htbp]
    \centering
    \includegraphics[width=5.42in, height=5.42in]{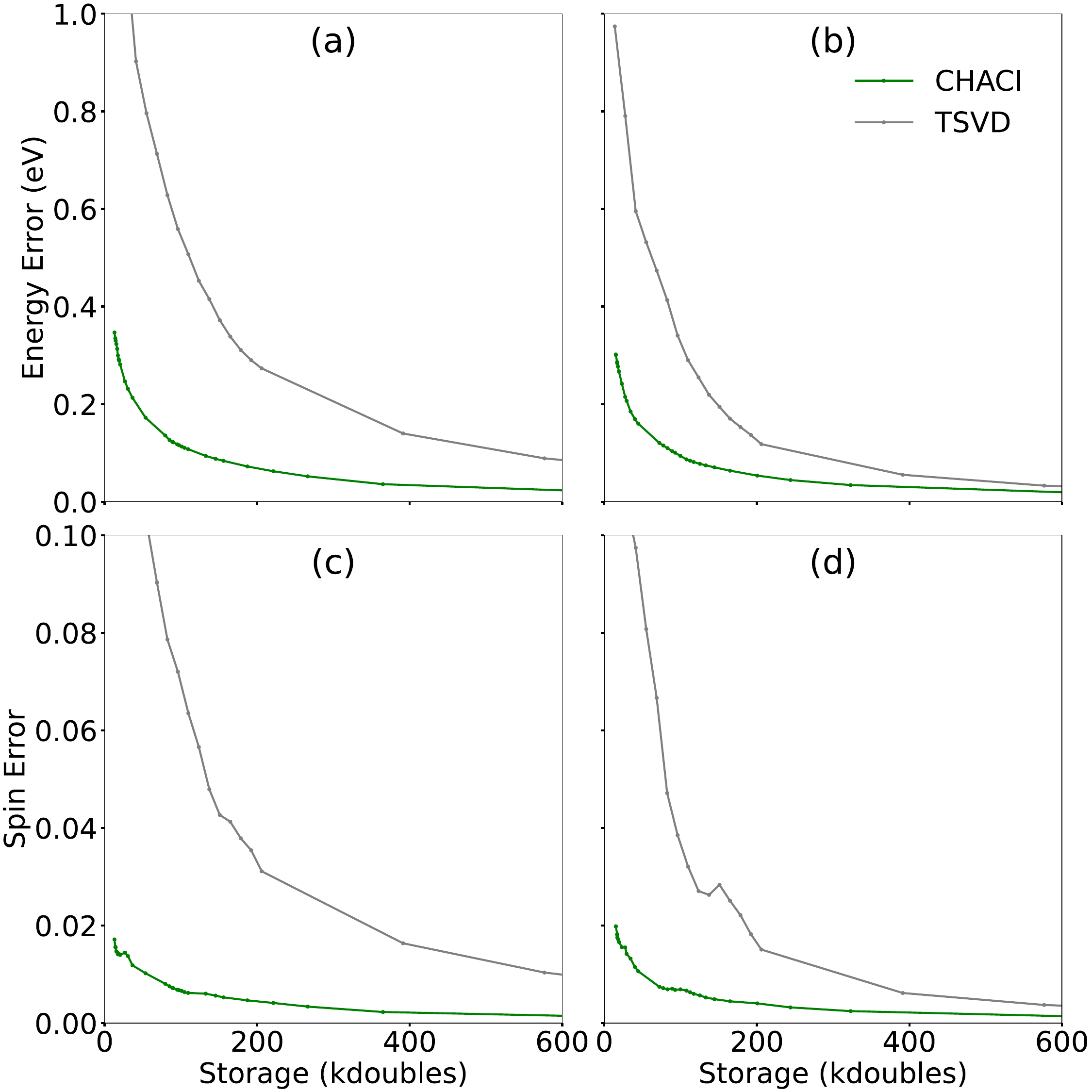}
    \caption{Absolute energy (a and b) and spin (c and d) errors as a function of the storage for 12-acene with a 14-14 active space. Panel (a) and (c) correspond to the singlet wave function, while (b) and (d) correspond to the triplet wave function. The gray and green lines correspond to TSVD and CHACI compression, respectively.}
    \label{fig:energy-spin-errors-14-14}
\end{figure}

Figures~\ref{fig:energy-spin-errors-14-14}a and b present errors in the singlet and triplet absolute energies, respectively. Again, the performance of CHACI is far superior to TSVD, especially in the small-storage regime, where the TSVD errors are a substantial fraction of an eV. Figure~\ref{fig:energy-spin-errors-14-14}c and d present spin error. Here CHACI again outperforms TSVD. Even with very modest (28 kdoubles) storage, the error in \(\langle S^{2} \rangle\) is 0.02, and the error drops rapidly toward zero with additional storage. In contrast, TSVD spin errors for the singlet case do not drop below 0.02 until nearly 400 kdoubles are stored.

\begin{figure}[htbp]
    \centering
    \includegraphics[width=6.47in, height=4.67in]{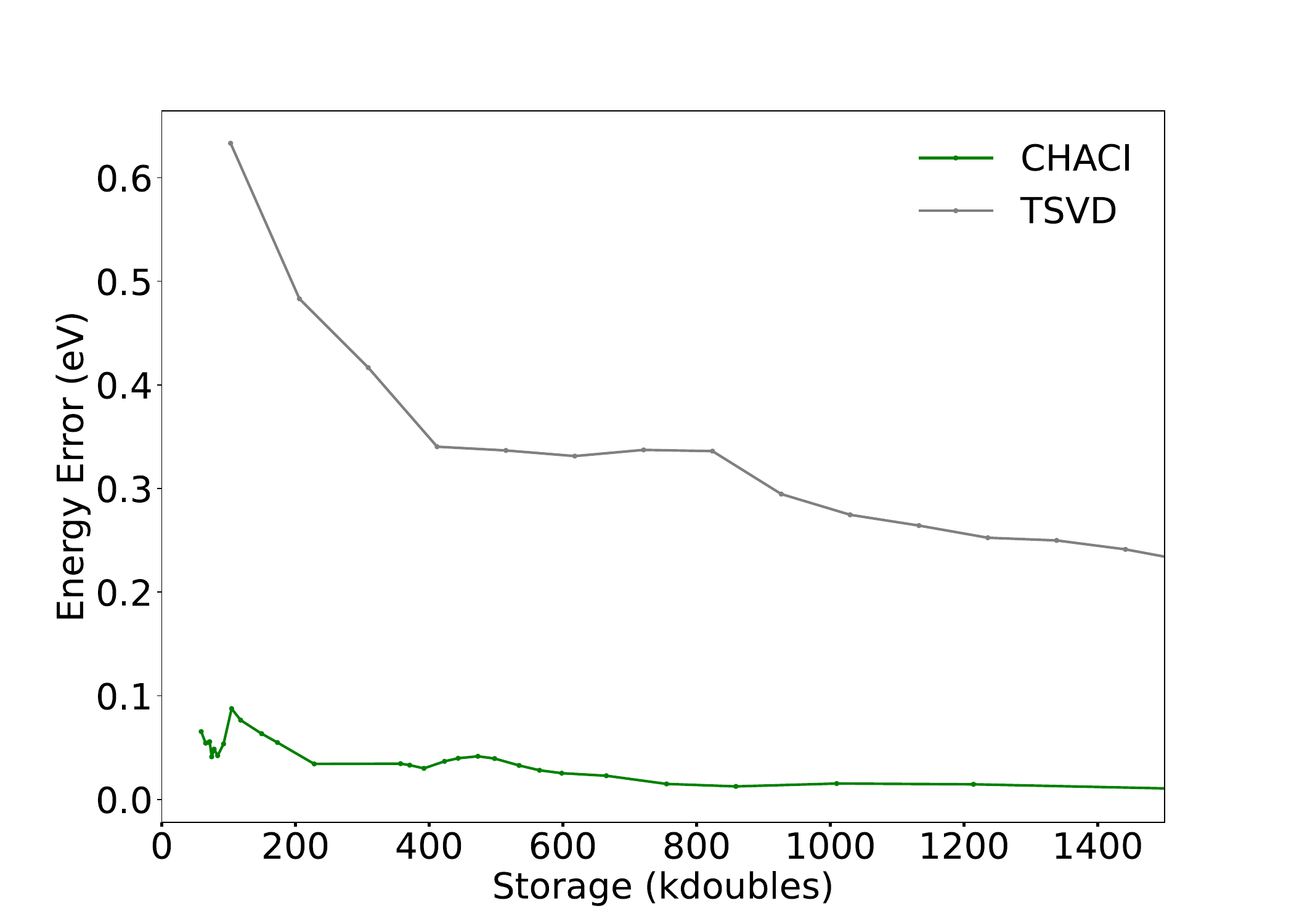}
    \caption{The error in the singlet-triplet gap of 12-acene as a function of total storage, computed with a 16-16 active space. The gray and green lines represent the error incurred by compression of the wave function using TSVD and CHACI, respectively.}
    \label{fig:error-gap-16-16}
\end{figure}

\begin{figure}[htbp]
    \centering
    \includegraphics[width=6.50in, height=6.50in]{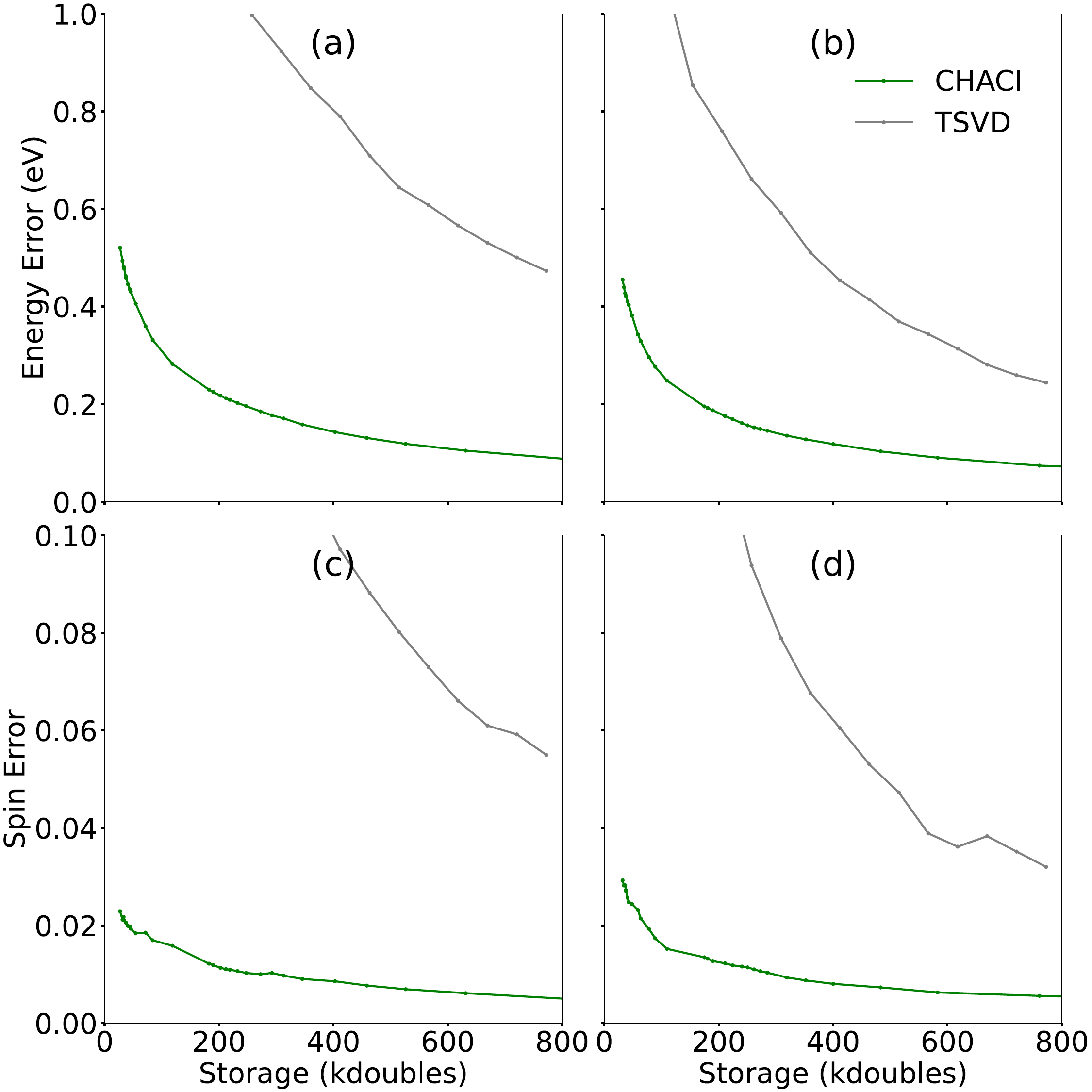}
    \caption{Absolute energy (a and b) and spin (c and d) errors as a function of the storage for 12-acene with a 16-16 active space. Panel (a) and (c) correspond to the singlet wave function, while (b) and (d) correspond to the triplet wave function. The gray and green lines correspond to TSVD and CHACI compression, respectively.}
    \label{fig:energy-spin-errors-16-16}
\end{figure}

Figures~\ref{fig:error-gap-16-16} and \ref{fig:energy-spin-errors-16-16} demonstrate that the difference in performance between CHACI and TSVD increases when the active space size is increased from 14-14 to 16-16. In Figure~\ref{fig:error-gap-16-16}, it can be seen that errors in the singlet-triplet gap are at or below 0.1 eV for all cases, when CHACI is employed, even when only 59 kdoubles are stored (compared to 165,637 kdoubles for dense storage). Errors decrease rapidly with additional storage. In contrast, TSVD errors are above 0.2 eV for all cases. Similarly large differences in performance are observed for errors in absolute energy and spin in Figure~\ref{fig:energy-spin-errors-16-16}. As in the 14-14 case, CHACI does a much better job of maintaining the spin symmetry of the wave function than TSVD.

\subsection{Extrapolation of Performance to Larger Active Spaces}
\label{subsec:extrapolation-of-performance}
\begin{figure}[htbp]
    \centering
    \includegraphics[width=5.89in, height=5.89in]{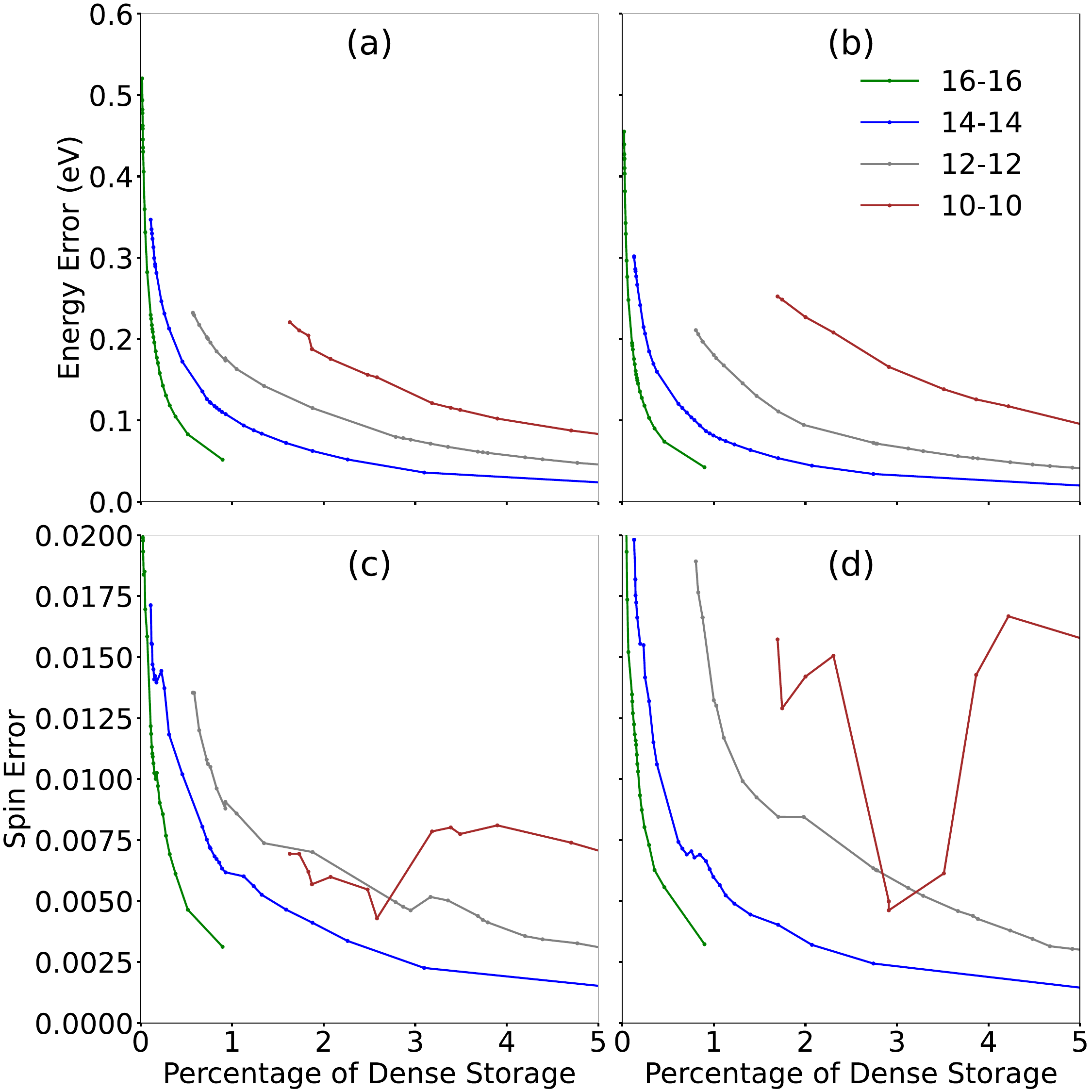}
    \caption{The absolute energy (a and b) and spin errors (c and d) as a function of the percentage of dense storage for 12-acene with 10-10, 12-12, 14-14, and 16-16 active spaces. Panels (a) and (c) correspond to the singlet wave function, while panels (b) and (d) correspond to the triplet spin wave function.}
    \label{fig:errors-vs-compression-ratio}
\end{figure}

Ultimately, our goal is not to compute dense wave functions for subsequent compression. Our goal is to solve for large CI wave functions using a hierarchically compressed basis. Thus, in this section, we consider the behavior of CHACI compression as a function of active space size. In Figure~\ref{fig:errors-vs-compression-ratio}, we consider several active spaces, comparing the convergence of several measures of the accuracy as a function of the percentage of dense storage used (the compression ratio). We find that as the size of the active space increases, the accuracies of both absolute energy and spin converge faster with increasing compression ratio.

\begin{figure}[htbp]
    \centering
    \includegraphics[width=4.55in, height=4.45in]{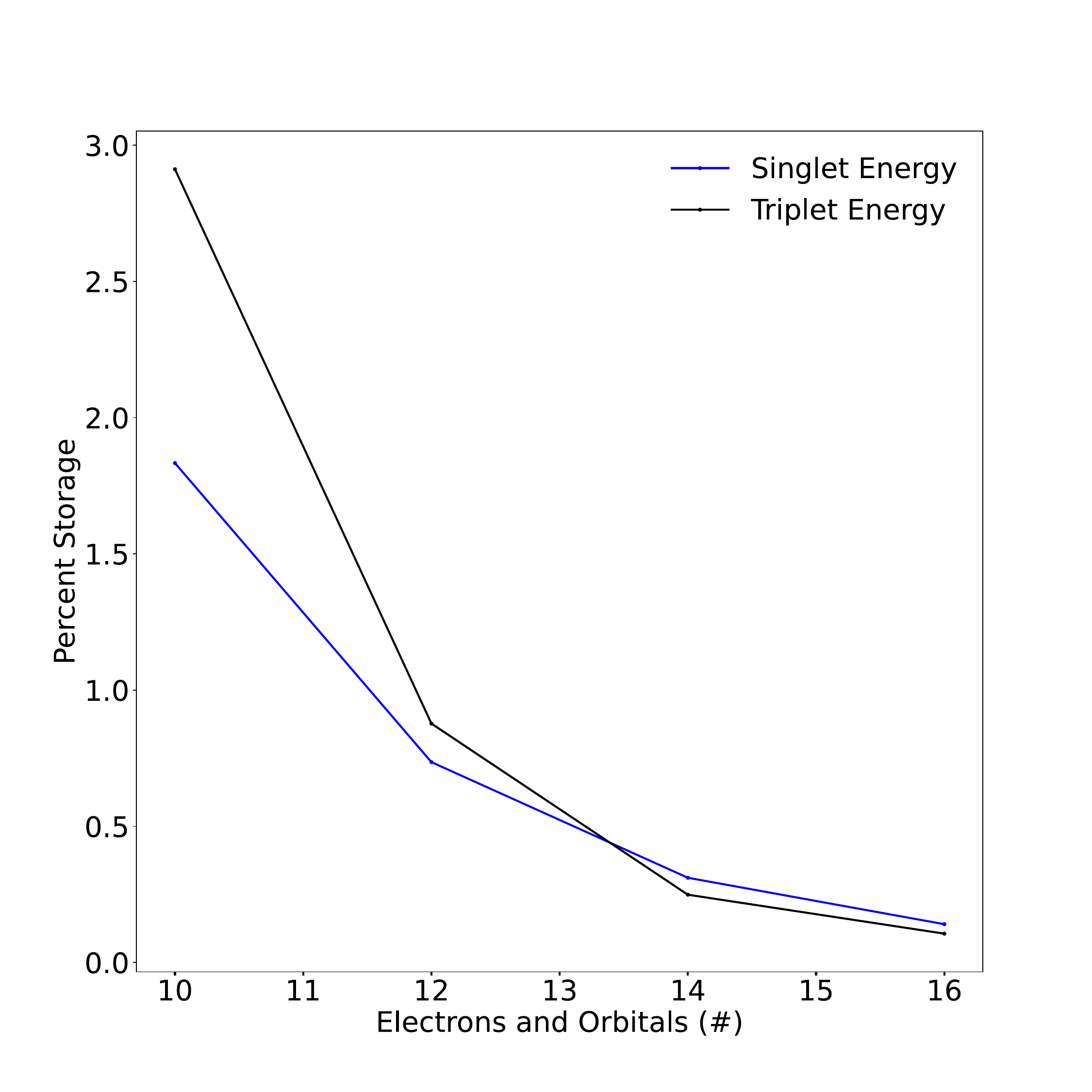}
    \caption{The storage ratio required to achieve \(<0.2\) eV accuracy in absolute energies as a function of the active space size. Blue and black lines correspond to the singlet and triplet wave functions, respectively.}
    \label{fig:compression-ratio-extrapolation}
\end{figure}

To quantify this convergence behavior, we plot the compression ratio at which absolute energies of 0.2 eV accuracy are achieved as a function of the number of active orbitals/electrons in Figure~\ref{fig:compression-ratio-extrapolation}. Both the singlet and triplet compression ratios converge quickly with increasing active space. Of the two, the triplet energy converges more slowly, thus we fit it to an exponential in order to extrapolate to larger active spaces. We find that the required compression ratio decays proportional to
\begin{equation}
f(N_\text{MO}) \propto e^{-0.561N_\text{MO}}.
\end{equation}
Extrapolating to larger active spaces, we estimate that a 24-24 active space could converge to 0.2 eV accuracy at a storage cost of 77,370 kdoubles, which is less than the cost of the dense storage of a 16-16 active space (165,637 kdoubles). Though the convergence behavior is likely to be system-dependent, this result certainly encourages further study.

\subsection{Effect of Using Optimal Rank on Compression}
\label{subsec:effect-of-dynamic-rank-on-compression}
\begin{figure}[htbp]
    \centering
    \includegraphics[width=6.41in, height=4.49in]{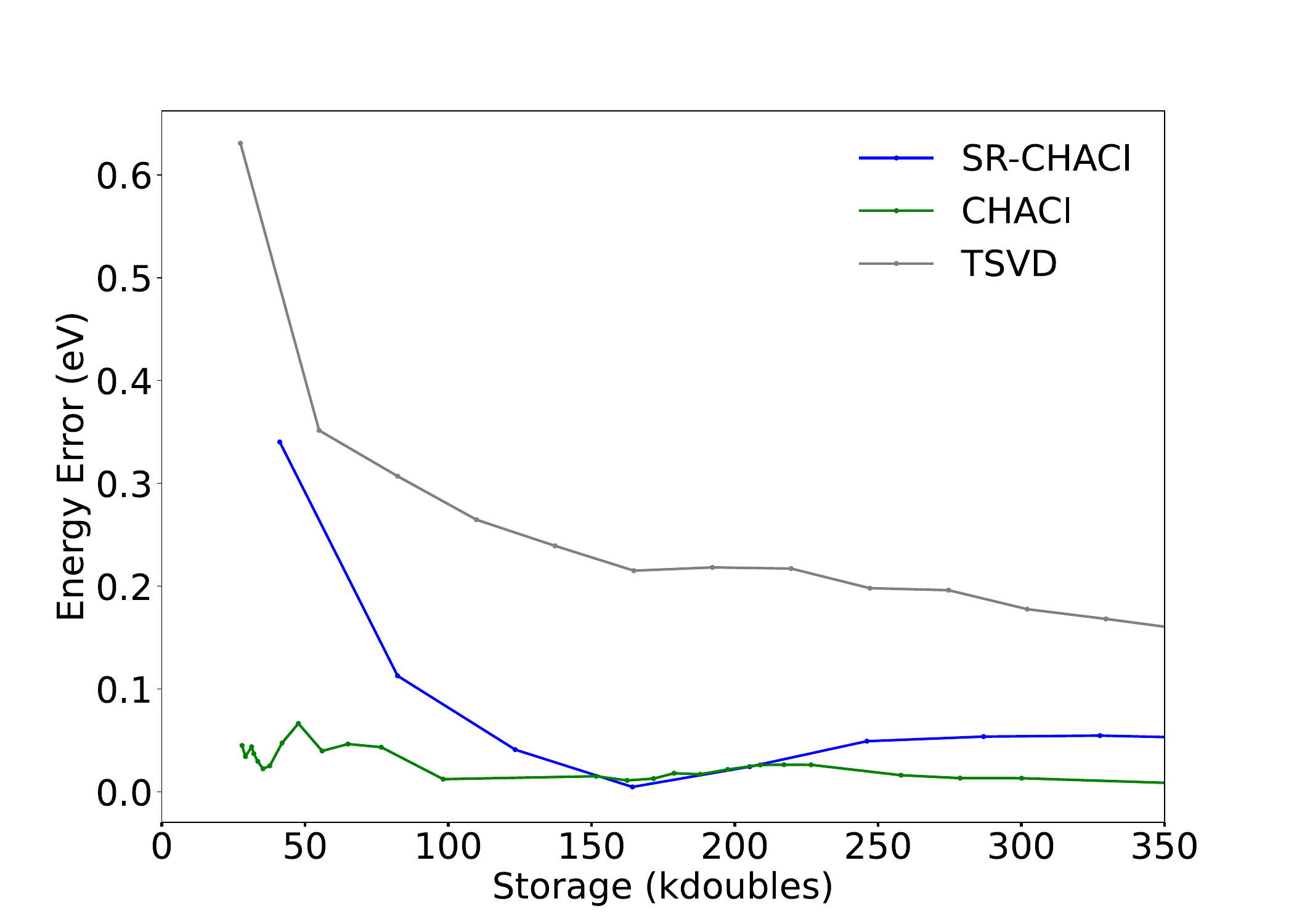}
    \caption{The SR-CHACI error (blue) in the singlet-triplet gap of 12-acene as a function of total storage, computed with a 14-14 active space. The TSVD and CHACI errors (gray and green, respectively) are shown for comparison.}
    \label{fig:SR-CHACI-gap-error}
\end{figure}

In order to assess the necessity of the optimal rank procedure, we compare SR-CHACI (which uses a static rank) to CHACI and TSVD. Figure~\ref{fig:SR-CHACI-gap-error} presents the error in the singlet-triplet gap as a function of the total storage for the 14-14 active space. Excepting one fortunate point at 160 kdoubles, the accuracy of SR-CHACI is significantly worse than that of CHACI, but still better than TSVD. However, considering the error in the absolute energies of the singlet and triplet states separately (Figure 11a and b, respectively), it is clear that this is due to error cancelation. Errors in the absolute energy of the singlet state derived from the SR-CHACI wave function are similar to those of TSVD, and much greater than those of CHACI. Further, errors in the SR-CHACI triplet state are slightly larger than TSVD.

\begin{figure}[htbp]
    \centering
    \includegraphics[width=6.50in, height=6.50in]{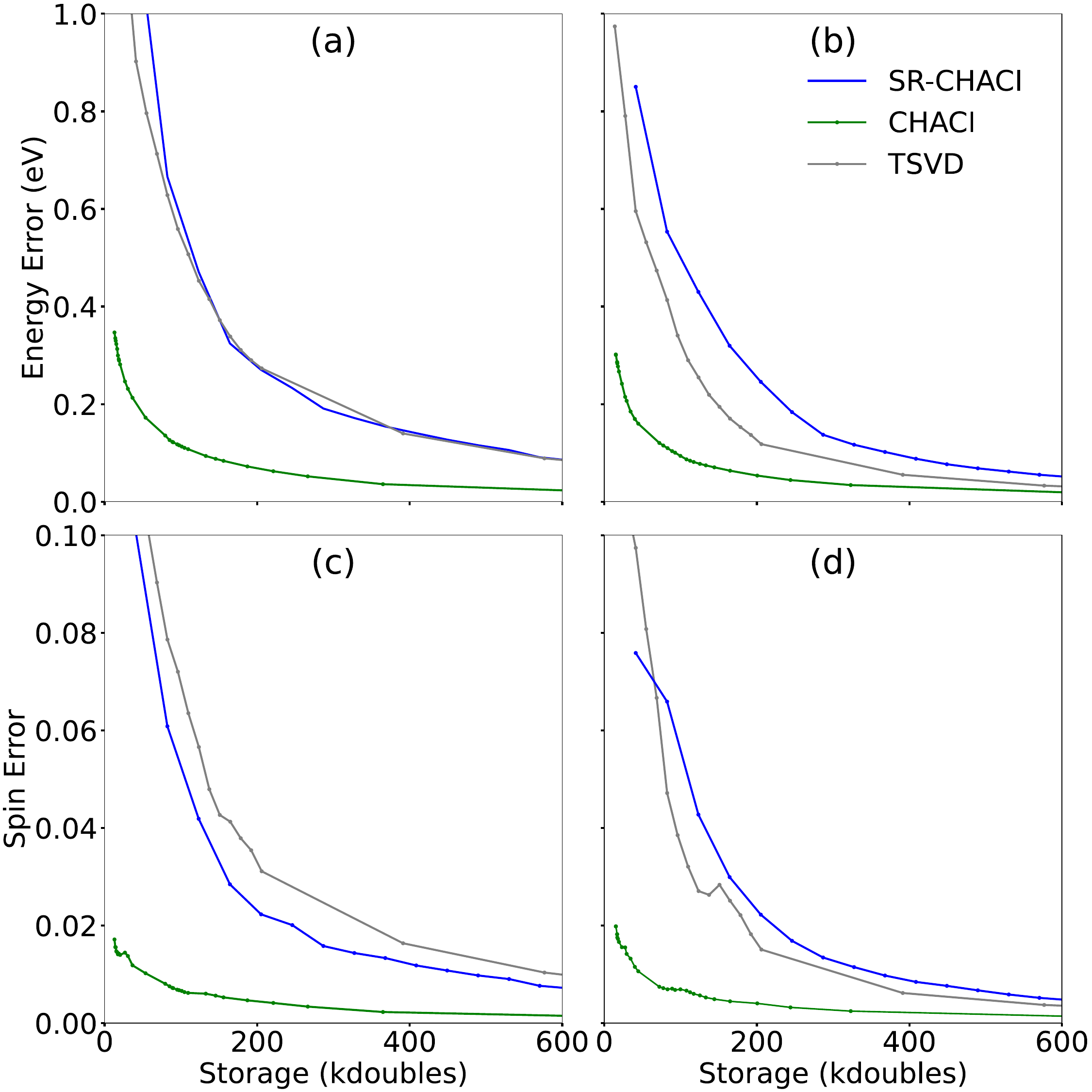}
    \caption{The SR-CHACI errors (blue) in the absolute energy (a and b) and spin (c and d) as a function of the storage for 12-acene (14-14 active space). Panels (a) and (c) correspond to the singlet wave function, while (b) and (d) correspond to the triplet wave function. The TSVD (gray) and CHACI (green) errors are shown for comparison.}
    \label{fig:SR-CHACI-energy-spin-errors}
\end{figure}

Analysis of spin errors tells a similar story. CHACI is much superior to SR-CHACI at reproducing the spin of the original wave function, and SR-CHACI has similar (and sometimes larger) spin errors compared to TSVD. Taken together, we conclude that optimal rank is an essential component of the CHACI algorithm.

\subsection{Effect of Sorting on Compression}
\label{subsec:effect-of-sorting-on-compression}
\begin{figure}[htbp]
    \centering
    \includegraphics[width=4.89in, height=4.69in]{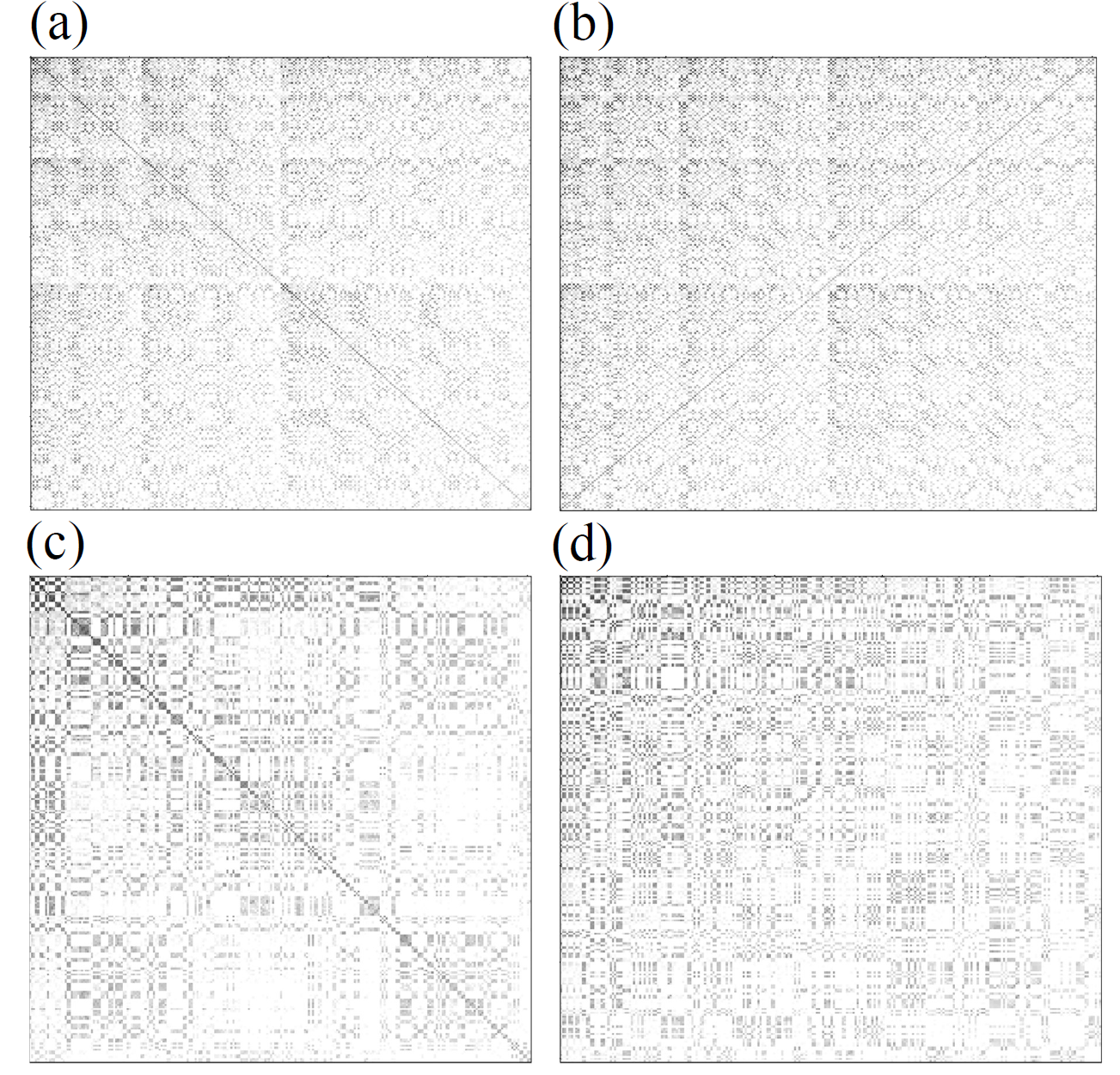}
    \caption{A heat map representation of the CI vector of 12-acene, computed with a 10-10 active space. To form the heat map, we take the logarithm (base 10) of the absolute value of the singlet and triplet CI vector coefficients. The color scale (white to black) ranges from \(10^{-6}\) to 1. Panels (a) and (b) correspond to the unsorted CI vector (with strings indexed according to Duch\cite{Duch1986}), while panels (c) and (d) are reordered according to the CHACI algorithm. Panels (a) and (c) show the singlet wave function, while panels (b) and (d) show the triplet wave function.}
    \label{fig:CI-vector-heatmap}
\end{figure}

Here we assess the necessity of another feature of the CHACI compression algorithm: the sorting of rows/columns of the \(\mathbf{C}\) matrix prior to compression. To this end, we compare U-CHACI, in which the rows/columns remain unsorted, to CHACI and TSVD. Figure~\ref{fig:CI-vector-heatmap} presents a heat-map of the uncompressed \(\mathbf{C}\) matrix of the singlet (panels a and c) and triplet (b and d) wave functions before (a and b) and after (c and d) sorting. Note that sorting concentrates larger elements into the upper-left corner of the matrix.

\begin{figure}[htbp]
    \centering
    \includegraphics[width=4.59in, height=3.22in]{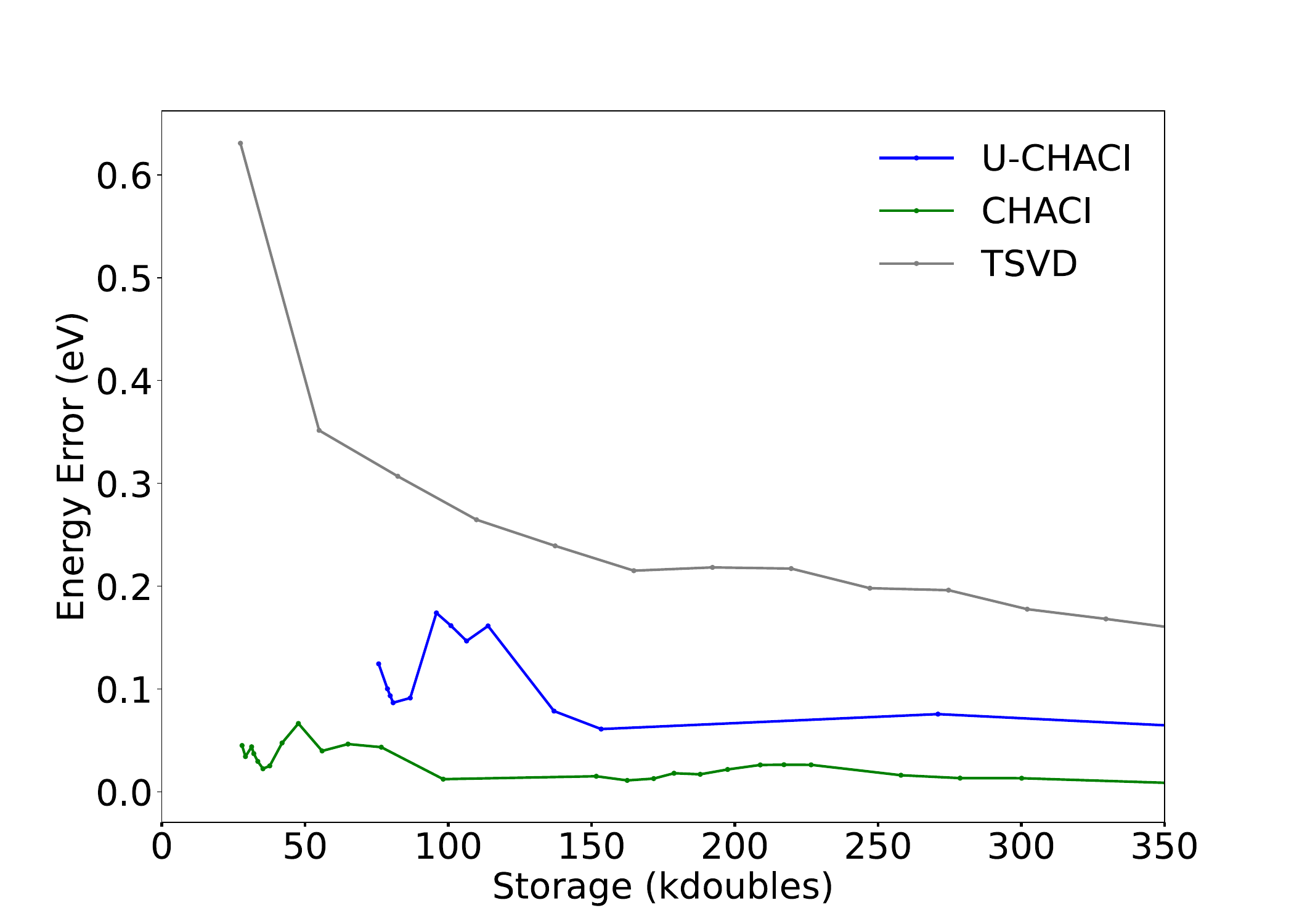}
    \caption{The U-CHACI error (blue) in the singlet-triplet gap of 12-acene as a function of total storage, computed with a 14-14 active space. The TSVD and CHACI errors (gray and green, respectively) are shown for comparison.}
    \label{fig:U-CHACI-gap-error}
\end{figure}

Figure~\ref{fig:U-CHACI-gap-error} compares the U-CHACI singlet-triplet errors to those of CHACI and TSVD. Though U-CHACI appears to be more accurate for predicting the relative energy than TSVD, it remains inferior to CHACI. Considering the errors in the absolute singlet and triplet energies (Figure~\ref{fig:SR-CHACI-energy-spin-errors}a and b), we see that U-CHACI errors are on the order of the same size as those of TSVD, and considerably larger than those of CHACI. That being said, U-CHACI is solidly between CHACI and TSVD in its ability to accurately describe the total spin angular momentum (Figure~\ref{fig:SR-CHACI-energy-spin-errors}c and d).

\begin{figure}[htbp]
    \centering
    \includegraphics[width=4.26in, height=4.26in]{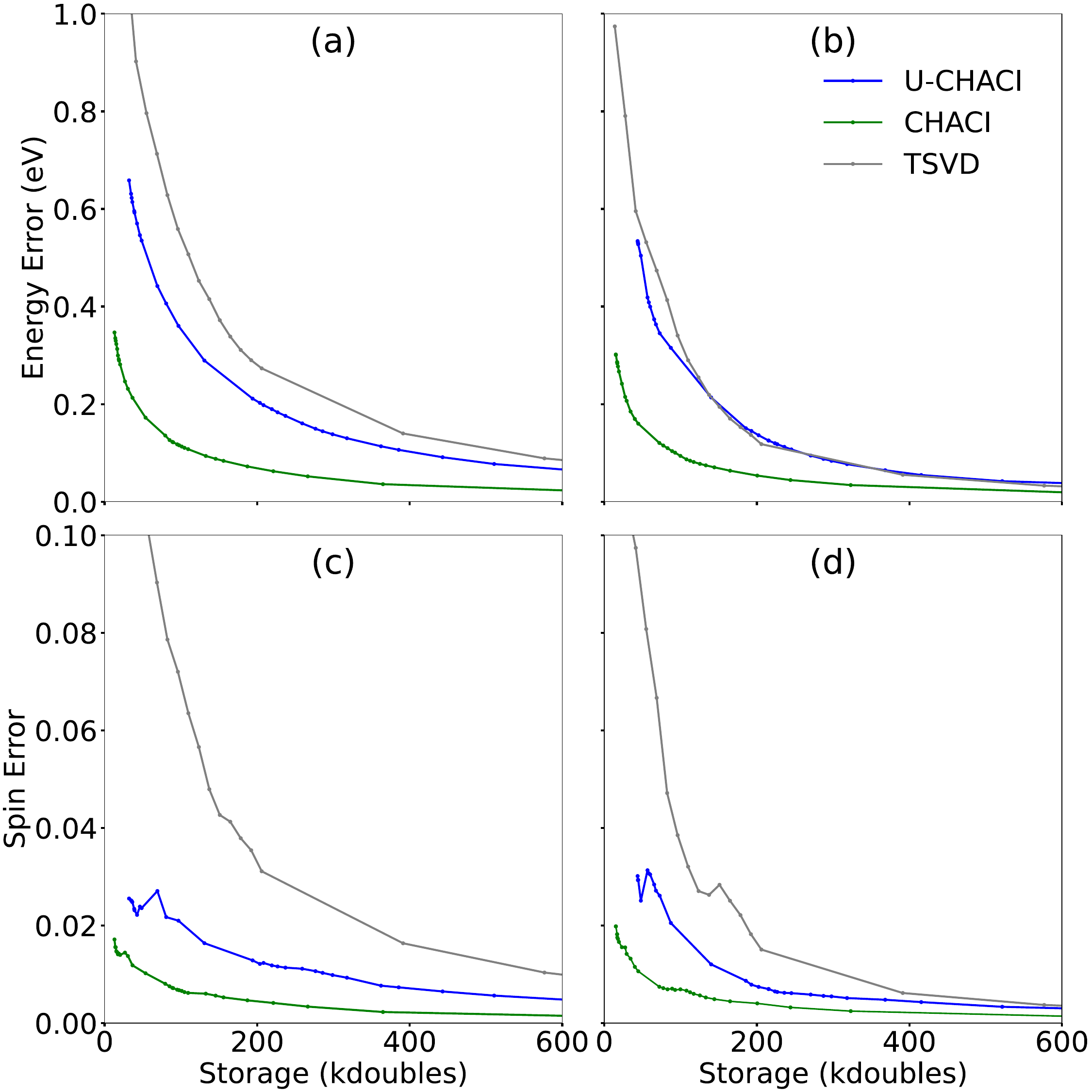}
    \caption{The U-CHACI errors (blue) in the absolute energy (a and b) and spin (c and d) errors as a function of the storage for 12-acene (14-14 active space). Panels (a) and (c) correspond to the singlet wave function, while (b) and (d) correspond to the triplet wave function. The TSVD (gray) and CHACI (green) errors are shown for comparison.}
    \label{fig:U-CHACI-energy-spin-errors}
\end{figure}

Taking this data together, we conclude that sorting is an essential component of the CHACI algorithm. However, our ultimate goal is not to compute the full wave function and subsequently compress it, and the type of \textit{a posteriori} sorting that we use in CHACI would not be possible if we were to directly solve for the wave function in compressed form. But given that the Duch ordering of spin strings does not allow for efficient compression, the determination of an \textit{a priori} scheme by which strings may be ordered for efficient compression remains an important open question.

\subsection{Effect of Upper Quadrant vs H-matrix Compression}
\label{subsec:effect-of-upper-quadrant}
\begin{figure}[htbp]
    \centering
    \includegraphics[width=4.79in, height=4.79in]{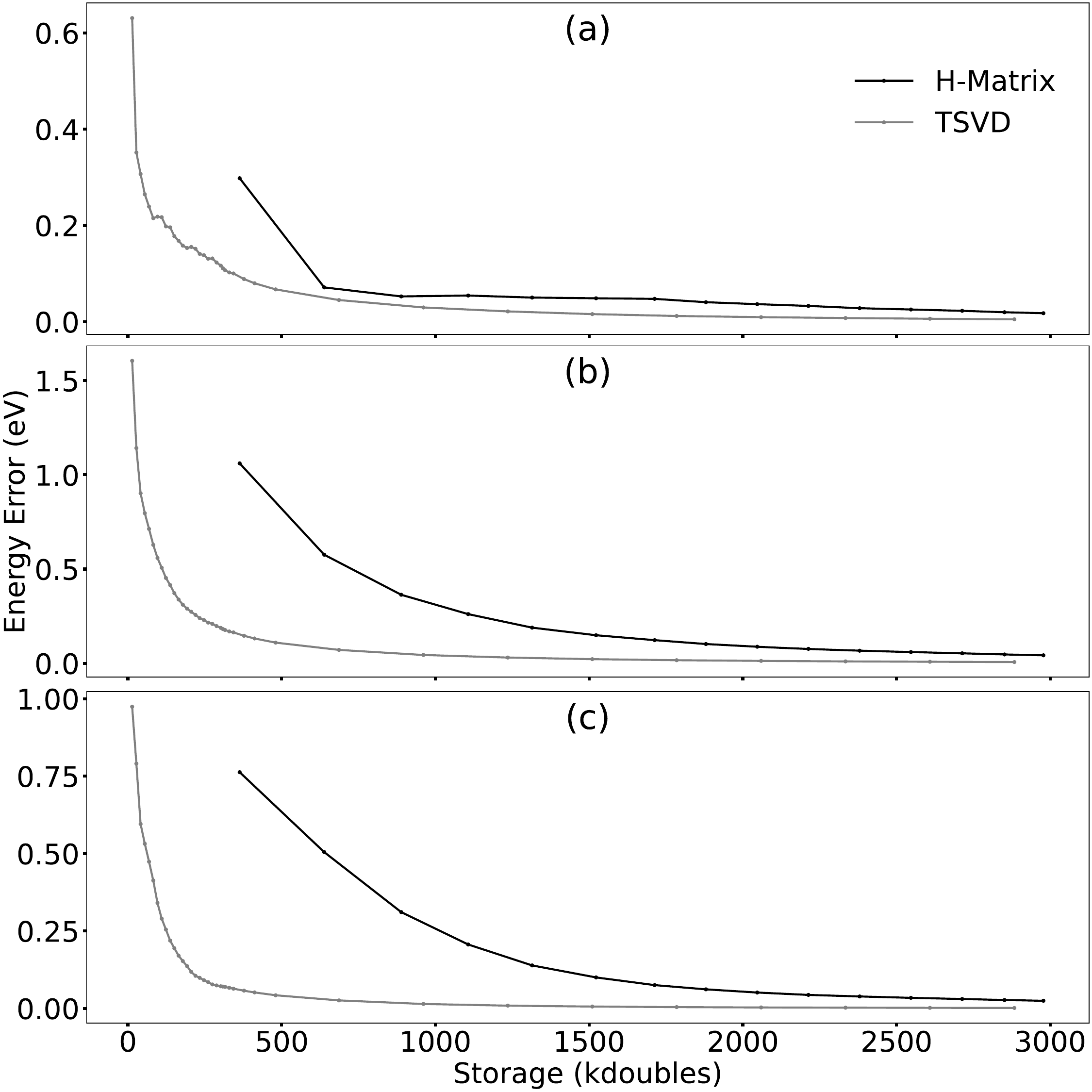}
    \caption{Panels (a), (b), and (c) show the errors in the singlet-triplet gap, singlet, and triplet energies, respectively, of 12-acene (14-14 active space) as a function of the required storage for H-matrix and TSVD compression (black and gray).}
    \label{fig:H-matrix-vs-SVD}
\end{figure}

Lastly, we consider compression of \(\mathbf{C}\) into the H-matrix format,\cite{Hackbusch1999} which is designed to leverage the diagonally dominant nature of the matrix, in contrast to the CH blocking scheme used in CHACI. Figure~\ref{fig:H-matrix-vs-SVD} shows the error in the singlet-triplet gap, singlet absolute energy, and triplet absolute energy of the 14-14 wave function compressed into H-matrix format. H-matrix compression is inferior to TSVD, thus we conclude that the CH blocking scheme is an essential component of the CHACI scheme.

\section{Conclusions}
\label{sec:conclusions}
Here we have presented an exploratory study in which we employ hierarchical matrices to perform lossy compression of CASCI wave functions of a strongly correlated system. We have demonstrated superior compression compared to a global low-rank approximation of the wave function, and we have demonstrated that CH blocking, optimal rank, and row/column sorting are essential features of the algorithm.

These promising results suggest several avenues for future study. The development of an algorithm to directly solve for the wave function in the CHACI representation would allow the accurate solution of strongly correlated wave functions with active spaces well beyond those that are accessible via brute-force CASCI implementations. That hierarchical matrices are well suited for implementation on large-scale distributed computing systems is a noteworthy advantage of this approach. Yet a direct solution will only be possible after an effective \textit{a priori} row/column sorting algorithm is developed, and the development of such an algorithm will require further analysis of the uncompressed wave function. 

\section*{Acknowledgements}

This work was supported by a seed grant from the Institute for Advanced Computational Science. KOB, AT, and BGL are grateful for support from the National Science Foundation under grant CHE-1954519 and for start-up funding from Stony Brook University.

\appendix
\renewcommand{\thesubsection}{\thesection.\arabic{subsection}}

\section{Near-Optimal CHACI Compression}
\label{sec:near-optimal-chaci-compression}
As described above, the CHACI algorithm introduces a strategy for
performing recursive partitioning. Its key idea is to approximate the CI vector via truncated SVD (TSVD) of a collection of small block matrices constructed hierarchically.  The algorithm
leverages the fact that many CI vectors, even of relatively strongly correlated systems, have data that predominantly reside in the matrix's upper-left corner.
Here we investigate this technique and propose an algorithm
that can achieve near-optimal storage efficiency within this framework.
For quantifying approximation accuracy, in this section, we employ the
squared Frobenius norm. Specifically, if \(\mathbf{A}\) represents
the original CI vector and \(\mathbf{A}'\) symbolizes its CHACI
approximation, then the error is quantified by
\(\|\mathbf{A} - \mathbf{A}'\|_F^2\).
Above, the numerical results consider the
energy and spin errors as additional metrics for error evaluation.

\subsection{Information Density}
To derive the algorithm, we first introduce the concept of
\emph{information density}, which quantifies the amount of information
contained in a singular value relative to the storage cost. Consider a matrix $\mathbf{A} \in \mathbb{R}^{m \times n}$ with its singular value decomposition (SVD) given by
\[
\mathbf{A} = \mathbf{U}\mathbf{\Sigma}\mathbf{V}^{\dagger} = [\mathbf{u}_1, \ldots, \mathbf{u}_m]\mathbf{\Sigma}
[\mathbf{v}_1, \ldots, \mathbf{v}_n]^T,
\]
where $\mathbf{u}_i \in \mathbb{R}^m$, $\mathbf{v}_i \in \mathbb{R}^n$, and $\mathbf{\Sigma}$ is an $m\times n$ diagonal matrix composed of $\sigma_i$ along its diagonal. Without loss of generality, let us assume $m\geq n$. Here, $\|\mathbf{A}\|_F^2 = \sum_{i=1}^{n} \sigma_i^2$ represents the total information contained in $\mathbf{A}$. 
Storing the first $k$ singular-value triplets would capture $\sum_{i=1}^k \sigma_i^2$ information and lose the remaining $\sum_{i=k+1}^n \sigma_i^2$. This is the best approximation of $\mathbf{A}$ in Frobenius norm, as given by the TSVD. To quantify the efficiency of storing each singular-value triplet $(\mathbf{u}_i, \sigma_i, \mathbf{v}_i)$, we define the \emph{information density} $\rho_i$ as follows:
\begin{definition}
The \emph{information density} $\rho_i$ of the singular-value triplet $(\mathbf{u}_i, \sigma_i, \mathbf{v}_i)$ is the ratio of the information content $\sigma_i^2$ to the storage cost $m+n+1$, i.e.,
\[
\rho_i = \frac{\sigma_i^2}{m+n+1}.
\]
\end{definition}

This measure is particularly useful when comparing matrices of different sizes. For instance, a $4 \times 4$ matrix with a singular value of $5$ has a higher information density of $25/9$ compared to a $8 \times 8$ matrix with a singular value of $6$, which has an information density of $36/17$, less than the $4\times 4$ matrix. This indicates why smaller matrices with higher information density are often preferred for storage when aiming to minimize memory usage while maximizing retained information.


\renewcommand{\algorithmicrequire}{\textbf{Input:}}
\renewcommand{\algorithmicensure}{\textbf{Output:}}
\begin{algorithm}[H]
\caption{Near-Optimal CHACI}
\label{alg:optimal-chaci}
\begin{algorithmic}[1]
\Require The sorted CI vector, number of partitioning levels $p$, and information density threshold $\rho$
\Ensure The CHACI format and storage information for each block
\State Split the CI vector into a corner hierarchical structure with $3p+1$ blocks
\State Store the upper-left block in dense format
\For{each of the remaining $3p$ blocks}
\State Let $n_{\text{row}}$ and $n_{\text{col}}$ be the numbers of rows and columns of the block
\State $k\leftarrow 0$, $k_{\text{max}}\leftarrow 1$, and perform rank-$1$ TSVD of the block
\While{$k<\min\{n_{\text{row}},n_{\text{col}}\}/2$}
\If{$k+1>k_{\text{max}}$}
\State \label{alg:line:double-rank} $k_{\text{max}}\leftarrow \min\{2k_{\text{max}}, n_{\text{row}}/2, n_{\text{col}}/2\}$
\State Update the TSVD of this block to rank $k_{\text{max}}$
\EndIf
\If{$\sigma_{k+1}^2\leq \rho(n_{\text{row}}+n_{\text{col}}+1)$}
\State \textbf{break}
\EndIf
\State $k\leftarrow k+1$
\EndWhile
\If{$k=0$}
\State Do not store the block
\ElsIf{$k=\min\{n_{\text{row}},n_{\text{col}}\}/2$}
\State Store the block in dense format
\Else
\State Store the block in TSVD format with rank $k$
\EndIf
\EndFor

\end{algorithmic}
\end{algorithm}

\subsection{Near-Optimal CHACI Algorithm}

Based on the concept of information density, we propose our CHACI algorithm, which strategically chooses between dense and sparse formats for each block. The ``near-optimal'' descriptor emphasizes that among all CHACI configurations, this algorithm seeks to minimize memory usage within a small margin of the optimal solution that achieves the same accuracy. 
Algorithm~\ref{alg:optimal-chaci} outlines this near-optimal CHACI algorithm. This algorithm takes the number of levels $p$ as an input, which should be approximately logarithmic in $M$. Based on our experiments, we choose $p$ to be the smallest integer that satisfies $k_0 2^{p} \geq M$, where $k_0=6$, which can be computed recursively. The algorithm includes a mechanism to switch to dense storage for small blocks. It also takes a threshold for the information density as input, and varying this threshold would result in different storage requirement, as we did in our numerical tests in Section~\ref{sec:results-and-discussion}. It is possible to change the input to be a memory budget instead, and in this setting CHACI possesses some optimality properties, as we analyze next.



\subsection{Analysis of Near Optimality}
\label{subsec:analysis-near-opttimality}

In Section~\ref{sec:results-and-discussion}, we showed numerically that CHACI uses less memory than TSVD. To help understand why this is the case, we present a theoretical analysis of the CHACI algorithm. For simplicity and clarity, we make some simplifying assumptions in our analysis. First, we assume that each block is square, which is the case if the input matrix is a square matrix and its number of rows (and of columns) is a power of two. Second, we change the for and while loops in the algorithm to be in descending order of the information density among all the blocks. This modification requires maintaining a priority queue of the singular triplets of all the blocks, but it does not significantly affect the accuracy and performance compared to using the for-while loops in Algorithm~\ref{alg:optimal-chaci}.



Consider a CHACI-partitioned matrix with $p$ levels, resulting in $3p + 1$ blocks, each represented by an SVD. Denote these blocks as $\mathbf{B}_1$ to $\mathbf{B}_{3p+1}$, with each block $\mathbf{B}_i$ having a size $m_i$. Each singular value $\sigma_{i,j}$ from block $\mathbf{B}_i$ contributes to the overall matrix's Frobenius norm squared, and hence
\[ \|\mathbf{A}\|_F^2 = \sum_{i=1}^{3p+1} \sum_{j=1}^{m_i} \sigma_{i,j}^2. \]
Storing all singular-value triplets provides complete information, but practical constraints often require storage optimization. Therefore, we compute the information density for each block and prioritize those with the highest values to minimize storage while maintaining accuracy. Let
$s_i$ represent the storage of the singular-value triplets and $\rho_i$ the information density. The following lemma motivates us to minimize storage by sorting the singular-value triplets based on $\rho_i$:
\begin{lemma}
\label{lem:chaci-sorting}
Given a collection of positive-number pairs $\{(s_i, \rho_i)\mid 1\leq i\leq n\}$, where $\rho_i$ is sorted in descending order, for an arbitrary selection of $k$ pairs $(s_{j_1}, \rho_{j_1}), \ldots, (s_{j_k}, \rho_{j_k})$, let $r$ be a number between $1$ and $n$ such that
\[ \sum_{i=1}^{r-1} s_i < \sum_{i=1}^k s_{j_i} \leq \sum_{i=1}^r s_i, \]
it then follows that
\[ \sum_{i=1}^r s_i \rho_i \geq \sum_{i=1}^k s_{j_i} \rho_{j_i}. \]
\end{lemma}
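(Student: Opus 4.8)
The plan is to read the lemma as an optimality statement for a greedy rule in a fractional-knapsack-type problem: interpreting $s_i$ as a storage cost and the product $s_i\rho_i=\sigma_i^2$ as the information ``value'' of item $i$, the claim is that picking items in nonincreasing order of density $\rho_i$ until the accumulated cost first attains a total of at least $\sum_{i=1}^k s_{j_i}$, the cost of an arbitrary competitor selection, captures at least as much value as that competitor. I would prove it by a direct comparison of the two index sets rather than by an iterative exchange argument, since the former is shorter and handles ties in the $\rho_i$ transparently. Throughout I assume the selected indices $j_1,\dots,j_k$ are distinct (this is the relevant case for distinct singular triplets, and it is in fact necessary: allowing repetitions makes the statement false, as one sees by picking a single high-density item twice).

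First I would fix notation. Write $A=\{1,\dots,r\}$ for the greedy index set and $B=\{j_1,\dots,j_k\}$ for the competitor set; the case $k=0$ is trivial, and since the $s_i$ are positive the partial sums are strictly increasing, so the index $r$ in the hypothesis exists and is unique. The right-hand inequality $\sum_{i=1}^k s_{j_i}\le\sum_{i=1}^r s_i$ says exactly $\sum_{i\in A}s_i\ge\sum_{i\in B}s_i$, and cancelling the contribution of the common indices $A\cap B$ turns this into $\sum_{i\in A\setminus B}s_i\ge\sum_{i\in B\setminus A}s_i$. Similarly, since $A\cap B$ contributes equally to both sides of the desired inequality, it suffices to show $\sum_{i\in A\setminus B}s_i\rho_i\ge\sum_{i\in B\setminus A}s_i\rho_i$.

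The crux is then to use the ordering of $\rho$ together with the fact that $A$ is an initial segment. Every index in $A\setminus B$ is $\le r$ and every index in $B\setminus A$ is $\ge r+1$, because $A$ contains all indices up to $r$; hence $\rho_1\ge\cdots\ge\rho_n$ gives $\rho_i\ge\rho_r$ for $i\in A\setminus B$ and $\rho_i\le\rho_r$ for $i\in B\setminus A$. Setting $\mu=\rho_r\ge 0$, I would bound termwise to obtain $\sum_{i\in A\setminus B}s_i\rho_i-\sum_{i\in B\setminus A}s_i\rho_i\ge\mu\bigl(\sum_{i\in A\setminus B}s_i-\sum_{i\in B\setminus A}s_i\bigr)\ge 0$, the last step using $\mu\ge 0$ and the storage inequality derived above. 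The degenerate case $A\setminus B=\emptyset$ forces $B\subseteq A$ by that same storage inequality, whereupon the conclusion is immediate from positivity of the $s_i\rho_i$.

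I do not expect a genuine obstacle here; the idea is essentially the exchange-lemma behind greedy algorithms. The only real care needed is the bookkeeping that yields the clean separation $A\setminus B\subseteq\{1,\dots,r\}$ and $B\setminus A\subseteq\{r+1,\dots,n\}$, the handling of the boundary cases $k=0$ and $A\setminus B=\emptyset$, and making the distinctness hypothesis explicit, so the main risk is a sloppy write-up rather than a wrong argument. An alternative route that swaps one item at a time between $B$ and $A$ also works but is messier and offers no advantage, so I would not pursue it.
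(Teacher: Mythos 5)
Your proof is correct, and it takes a recognizably different route from the paper's. The paper proves the lemma by converting both sums into integrals of step functions on $(0,S]$ --- $f$ built from the greedy prefix and $\hat f$ from the competitor selection reordered by index --- and concluding from the pointwise bound $f(x)\ge\hat f(x)$; that pointwise bound is asserted rather than spelled out, and verifying it requires essentially the observation that $\sum_{i=1}^q s_{J_i}\le\sum_{i=1}^{J_q}s_i$. You instead cancel the common indices $A\cap B$, observe that $A\setminus B\subseteq\{1,\dots,r\}$ and $B\setminus A\subseteq\{r+1,\dots,n\}$ because the greedy set is an initial segment, and bound both remainders against the single threshold $\mu=\rho_r$. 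Your version is more elementary and fully self-contained (every inequality is termwise), and it makes explicit two points the paper leaves implicit: that the selected indices must be distinct for the statement to hold, and that $r$ exists and is unique because the $s_i$ are positive. What the paper's formulation buys is a picture --- the ``area under the density profile'' --- that generalizes naturally to comparing partial prefixes of different lengths and is reused implicitly in the storage bound of Theorem~\ref{thm:storage}; but as a proof of this lemma your set-difference argument is at least as clean and arguably tighter.
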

\begin{proof}
Let $S=\sum_{i=1}^r(s_i)$. Define a piecewise function $f(x)$ over $0 < x \leq S$ as
\[
f(x) = \begin{cases}
\rho_1 & \text{if } 0 < x \leq s_1, \\
\rho_2 & \text{if } s_1 < x \leq s_1 + s_2, \\
\vdots \\
\rho_r & \text{if } \sum_{i=1}^{r-1} s_i < x \leq \sum_{i=1}^{r} s_i, \\
0 & \text{otherwise}.
\end{cases}
\]
Then,
\[ \sum_{i=1}^r s_i \rho_i = \int_0^{S} f(x) \, dx. \]
Sorting the $\rho_{j_i}$ in descending order, or equivalently sorting $j_i$ in ascending order, we redefine the pairs as $(s_{J_i}, \rho_{J_i})$, where $J_i$ is the $i$th smallest element in the set $\{j_i\}$.
Define $\hat{f}(x)$ similarly as $f(x)$ but with $\rho_{J_i}$ in place of $\rho_i$, i.e.,
\[ \hat{f}(x) = \begin{cases}
\rho_{J_q} & \text{if } \sum_{i=1}^{q-1} s_{J_i} < x \leq \sum_{i=1}^{q} s_{J_i} \text{ for } 1\leq q \leq m,\\
0 & \text{otherwise}.
\end{cases} \]
Since $\rho_i$ is sorted in descending order, we have $f(x) \geq \hat{f}(x)$ for all $x \geq 0$. Therefore,
\[ \sum_{i=1}^r s_i \rho_i - \sum_{i=1}^{m} s_{j_i} \rho_{j_i} = \int_0^S (f(x) - \hat{f}(x)) \, dx \geq 0. \]
\end{proof}

Lemma~\ref{lem:chaci-sorting} implies that by sorting the singular-value triplets based on their information densities, we can maximize the information retained for a given storage cost. This result suggests a greedy algorithm that selects the singular values with the highest information densities first, thereby minimizing storage while maintaining the desired accuracy. However, such a greedy strategy may not always yield the optimal solution for a given accuracy requirement. The following theorem provides a bound on the storage difference between the greedy algorithm and the optimal solution.

\begin{theorem}
\label{thm:storage}
Given a collection of positive-number pairs $\{(s_i, \rho_i)\mid 1\leq i\leq n\}$ as in Lemma~\ref{lem:chaci-sorting} and $t\in(0, \sum_i^n s_i]$, for an arbitrary selection of $k\leq n$ pairs $(s_{j_1}, \rho_{j_1}), \ldots, (s_{j_k}, \rho_{j_k})$ such that $\sum_{i=1}^k s_{j_i} \rho_{j_i} \geq t$, let $r$ be the number of triplets such that
\[ \sum_{i=1}^{r-1} s_i\rho_i < t \leq \sum_{i=1}^r s_i\rho_i, \]
it then follows that
\[
\sum_{i=1}^r s_i - \sum_{i=1}^{k} s_{j_i} < N+2,
\]
where $N$ is the number of rows in the matrix. In other words, the storage difference between the greedy algorithm and the optimal CHACI solution is bounded by a number no more than $N+2$.
\end{theorem}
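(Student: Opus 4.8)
The plan is to build on Lemma~\ref{lem:chaci-sorting} and quantify how far the greedy "sort by information density" selection can deviate from an optimal one when the target is phrased in terms of retained information $t$ rather than storage. First I would set up the comparison cleanly: let the greedy algorithm accumulate triplets in descending order of $\rho_i$, stopping at index $r$ determined by $\sum_{i=1}^{r-1} s_i\rho_i < t \leq \sum_{i=1}^r s_i\rho_i$, and let $(s_{j_1},\rho_{j_1}),\dots,(s_{j_k},\rho_{j_k})$ be any competing selection achieving at least the same retained information $\sum_{i=1}^k s_{j_i}\rho_{j_i} \geq t$. The goal is the bound $\sum_{i=1}^r s_i - \sum_{i=1}^k s_{j_i} < N+2$.

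The key step is to apply Lemma~\ref{lem:chaci-sorting} in its contrapositive direction. Suppose for contradiction that $\sum_{i=1}^k s_{j_i} \leq \sum_{i=1}^{r-1} s_i$, i.e. the competing selection uses no more storage than the greedy prefix of length $r-1$. Then, letting $r'$ be the integer with $\sum_{i=1}^{r'-1} s_i < \sum_{i=1}^k s_{j_i} \leq \sum_{i=1}^{r'} s_i$, we have $r' \leq r-1$, and Lemma~\ref{lem:chaci-sorting} gives $\sum_{i=1}^{r'} s_i\rho_i \geq \sum_{i=1}^k s_{j_i}\rho_{j_i} \geq t$. But $r' \leq r-1$ forces $\sum_{i=1}^{r'} s_i\rho_i \leq \sum_{i=1}^{r-1} s_i\rho_i < t$ by monotonicity of the partial sums (all terms positive) and the defining inequality for $r$, a contradiction. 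Hence $\sum_{i=1}^k s_{j_i} > \sum_{i=1}^{r-1} s_i$, and therefore
\[
\sum_{i=1}^r s_i - \sum_{i=1}^k s_{j_i} < \sum_{i=1}^r s_i - \sum_{i=1}^{r-1} s_i = s_r.
\]
It then remains to bound the single block's storage $s_r$. Under the analysis assumptions each block is square of size $m_r \times m_r$ with $m_r \leq N/2$ (since the first-level partition already halves the dimension), and one singular-value triplet costs $s_r = m_r + m_r + 1 = 2m_r + 1 \leq N+1 < N+2$. This closes the argument.

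The main obstacle I anticipate is the bookkeeping around the "dense vs.\ sparse" switch and the fact that the real Algorithm~\ref{alg:optimal-chaci} stores the upper-left block densely and uses block-local for/while loops rather than a global priority queue. The simplifying assumptions stated just before the theorem—square blocks, and reordering the loops into global descending-$\rho$ order—are exactly what make the clean telescoping above valid, so I would lean on them explicitly. A secondary subtlety is ensuring the partial sums $\sum_{i=1}^\ell s_i\rho_i$ are strictly increasing (needed to locate $r$ unambiguously and to derive the contradiction); this holds because all $s_i$ and $\rho_i$ are positive, which the hypothesis grants. The only genuinely quantitative input beyond Lemma~\ref{lem:chaci-sorting} is the per-block storage bound $s_r \leq N+1$, which follows from the corner-hierarchical partition geometry; everything else is a short monotonicity/contradiction argument.
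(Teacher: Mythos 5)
Your proposal is correct and follows essentially the same route as the paper's proof: assume for contradiction that the competing selection uses no more storage than the greedy prefix of length $r-1$, invoke Lemma~\ref{lem:chaci-sorting} to conclude its retained information is below $t$, and then bound the resulting gap by the single-triplet storage $s_r \leq 2\lceil N/2\rceil + 1 \leq N+2$. Your insertion of the intermediate index $r'$ to match the exact hypotheses of Lemma~\ref{lem:chaci-sorting} is a slightly more careful rendering of the same step the paper performs directly.
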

\begin{proof}
Assume that the selected pairs $(s_{j_i}, \rho_{j_i})$ satisfy that the storage
\[
\sum_i s_{j_i} \leq \sum_{i=1}^{r-1} s_i.
\]
Then by Lemma~\ref{lem:chaci-sorting},
\[
\sum_i s_{j_i} \rho_{j_i} \leq \sum_{i=1}^{r-1} s_i \rho_i < t,
\]
which conflicts with the fact that this selection satisfies the requirement
\[
\sum_i s_{j_i} \cdot \rho_{j_i} \geq t.
\]
Thus we have
\[
\sum_i s_{j_i} > \sum_{i=1}^{r-1} s_i,
\]
so
\[
\sum_{i=1}^r s_i - \sum_i s_{j_i} < \sum_{i=1}^r s_i - \sum_{i=1}^{r-1} s_i = s_r \leq \max_{i=1}^{r}\{s_i\}.
\]
Since the largest block is of size $\lceil N/2\rceil \times \lceil N/2\rceil$, we have $\max{s_i} \leq N+2$.
\end{proof}

In practice, the last block $s_r$ is even much smaller than $N+2$. Hence, the storage difference between the greedy algorithm and the optimal solution is bounded by a number typically far smaller than $N$. Therefore, CHACI is nearly optimal in its storage requirement. In the following, we further show that CHACI typically requires less memory and is more efficient than applying the TSVD to the global matrix.



\subsection{Analysis of Complexity}
\label{subsec:implementation-and-complexity}

The CHACI algorithms assume that the input CI vector is fully represented, requiring \(N^2\) storage. The extra intermediate storage of the algorithm is about the same as the input size. Hence, we focus on the output's storage requirement, a common approach for compression techniques.

For each block \(i\) of size \(m_i \times m_i\), if stored in TSVD format with rank \(k_i\), the storage requirement includes \(2k_im_i\) for the vectors (each vector has \(m_i\) entries) and \(k_i\) for the singular values, totaling \(2k_im_i + k_i\). If a block is stored in a dense format, its requirement is \(m_i^2\). Thus, the total storage for the output is
\[
\text{Total output size} = \sum_{\text{TSVD block } i} (2k_im_i + k_i) + \sum_{\text{dense block } i} m_i^2.
\]
It is worth noting that the dense blocks are typically for the smaller blocks, and the total storage is dominated by the TSVD blocks. Note that
\[
\sum_{i=1}^{3p+1} m_i \leq \sum_{j=1}^{p} \frac{3 N}{2^{j}} +\left\lceil \frac{N}{2^p} \right\rceil\leq 3N+1,
\]
and
\[
\sum_{i=1}^{3p+1} (2m_i + 1) \leq 6N+6p+3=6N+o(N).
\]
For simplicity, assume all the blocks are stored in TSVD format. The output can be bounded by
\[
\text{Total output size} \lesssim 6N\max\{k_i\}
\]
for sufficiently large $N$. This bound is tight if the ranks $\{k_i\}$ are uniform, but it is pessimistic if the ranks are highly non-uniform, as they are in practice for CI vectors.

Now we compare the storage requirement to that of global TSVD.
A global TSVD for the CI vector traditionally stores \(K\) singular values and vectors, resulting in an output storage of \(2KN + K\).
Hence, if the input matrix were a uniform low-rank matrix, CHACI would require more storage than the global TSVD.\footnote{H-matrices would also require more storage than the global TSVD if the input matrix is a uniform low-rank matrix.} However, CHACI is designed to exploit the structure of CI vectors, which are not uniformly low-rank. In our numerical experimentation, the
output size of near-optimal CHACI is typically about 2\% of that of a global TSVD of equal accuracy.

The time complexity of the near-optimal CHACI algorithm involves computing TSVDs for various blocks. 
For each block \(i\) of size \(m_i \times m_i\), a rank-\(k_i\) TSVD can be approximately computed in \(\mathcal{O}(k_im_i^2)\) using iterative methods.
An important detail worth emphasizing is Line~\ref{alg:line:double-rank} of Algorithm~\ref{alg:optimal-chaci}. In this step, we would double the TSVD rank if the desired accuracy is not met. Such a doubling process results in a geometric progression of the TSVD computation cost, calculated as follows:
\[
\sum_{j=1}^{\lceil \log_2 k_i \rceil} 2^j m_i^2 \leq 4k_i m_i^2.
\]
Consequently, the total time complexity for all blocks in the CHACI algorithm is
\[
\text{Total time complexity} = \sum_{i=1}^{3p+1} \mathcal{O}(k_i m_i^2),
\]
where \(\max\{m_i\}=N/2\). If the rank increases were linear when recomputing the TSVD, the worst-case time complexity could potentially reach \(\mathcal{O}(k_i^2 m_i^2)\) for each block, potentially increasing the overall time complexity.

Now we compare the time complexity to that of global TSVD.
Performing a global TSVD of an \(N \times N\) matrix with a truncation rank \(K\) incurs a time complexity of \(\mathcal{O}(KN^2)\). Since
\[
\sum_{i=1}^{3p+1} m_i^2 \leq \sum_{j=1}^{\lceil \log N \rceil} 3\frac{N^2}{2^{2j}} \leq N^2.
\]
As long as \( k_i < K \), the near-optimal CHACI algorithm has a lower computational cost than a global TSVD. In practice, \( k_i \ll K \) in the near-optimal CHACI algorithm, as we will demonstrate in our numerical experimentation. Hence, we can expect the near-optimal CHACI algorithm to be significantly faster than a global TSVD for moderately large $N$.

\nocite{}

\bibliography{refs}

\end{document}